\newenvironment{wideenumerate}{\enumerate\addtolength{\itemsep}{5pt}}{\endenumerate}
\newcommand*{\QEDB}{\hfill\ensuremath{\square}}
\begin{document}
\sloppy

\title{Variational inference for sparse spectrum Gaussian process regression}


\author{Linda S. L. Tan  \and  Victor M. H. Ong \and David J. Nott \and Ajay Jasra }


\institute{Linda S. L. Tan \at
              Department of Statistics and Applied Probability \\
              National University of Singapore \\
                            Tel.: +65-6516-4416\\ 
                            Fax: +65-6872-3919\\  
              \email{statsll@nus.edu.sg}           
\and
              Victor M. H. Ong \at
               Department of Statistics and Applied Probability \\
              National University of Singapore \\
              \email{victor84@u.nus.edu.sg}           
\and
           David J. Nott \at
           Department of Statistics and Applied Probability \\
           National University of Singapore \\
           \email{standj@nus.edu.sg} 
\and
          Ajay Jasra \at
           Department of Statistics and Applied Probability \\
           National University of Singapore \\
           \email{staja@nus.edu.sg}
}

\date{Received: date / Accepted: date}

\maketitle

\begin{abstract} 
We develop a fast variational approximation scheme for Gaussian process (GP) regression, where the spectrum of the covariance function is subjected to a sparse approximation. Our approach enables uncertainty in covariance function hyperparameters to be treated without using Monte Carlo methods and is robust to overfitting. Our article makes three contributions. First, we present a variational Bayes algorithm for fitting sparse spectrum GP regression models that uses nonconjugate variational message passing to derive fast and efficient updates. Second, we propose a novel adaptive neighbourhood technique for obtaining predictive inference that is effective in dealing with nonstationarity. Regression is performed locally at each point to be predicted and the neighbourhood is determined using a measure defined based on lengthscales estimated from an initial fit. Weighting dimensions according to lengthscales, this downweights variables of little relevance, leading to automatic variable selection and improved prediction. Third, we introduce a technique for accelerating convergence in nonconjugate variational message passing by adapting step sizes in the direction of the natural gradient of the lower bound. Our adaptive strategy can be easily implemented and empirical results indicate significant speedups.

\keywords{Local Gaussian process \and sparse approximation \and nonconjugate variational message passing \and adaptive neighbourhood \and bound optimization}
\end{abstract}

\section{Introduction}
Gaussian process (GP) models provide a flexible, probabilistic approach to regression and are widely used. However, application of GP models to large data sets is challenging as the memory and computational requirements scale as $O(n^2)$ and $O(n^3)$ respectively, where $n$ is the number of training data points. Various sparse GP approximations have been proposed to overcome this limitation. A unifying framework of existing sparse methods is given in \cite{Quinonero2005}. We consider the stationary sparse spectrum GP regression model introduced by \cite{Lazaro2010}, where the spectrum of the covariance function is sparsified instead of the usual spatial domain. The SSGP algorithm developed by \cite{Lazaro2010} for fitting this model uses conjugate gradients to optimize the marginal likelihoood with respect to the hyperparameters and spectral points. Comparisons with other state-of-the-art sparse GP approximations such as the fully independent training conditional model \citep[first introduced as sparse pseudo-input GP in][]{Snelson2006} and the sparse multiscale GP \citep{Walder2008}, showed that SSGP yielded significant improvements. However, optimization with respect to spectral frequencies increases the tendency to underestimate predictive uncertainty and poses a risk of overfitting in the SSGP algorithm.

In this paper, we develop a fast variational approximation scheme for the sparse spectrum GP regression model, which enables uncertainty in covariance function hyperparameters to be treated. In addition, we propose an adaptive local neighbourhood approach for dealing with nonstationary data. Although accounting for hyperparameter uncertainty may be of little importance when fitting globally to a large data set, local fitting within neighbourhoods results in fitting to small data sets even if the full data set is large, and here it is important to account for hyperparameter uncertainty to avoid overfitting.  Our examples show that our methodology is particularly beneficial when combined with the local fitting approach for this reason. Our approach also allows hierarchical models involving covariance function parameters to be constructed. This idea is implemented in the context of functional longitudinal models by Mensah {\it et al.} (2014) so that smoothness properties of trajectories can be related to individual specific covariates.

GPs have diverse applications and various methods have been developed to overcome their computational limitations for handling large data sets. A good summary of approximations used in modelling large spatial data sets is given in \cite{Ren2011}. Computational costs can also be reduced through local GP regression as a much smaller number of training data is utilized in each partition. This approach has been considered in machine learning \cite[e.g.][]{Snelson2007, Nguyen2009, Park2010} and in spatial statistics \citep[e.g.][]{Vecchia1988, Haas1995, Stein2004, Kim2005}. \cite{Urtasun2008} propose fitting GP models in local neighbourhoods which are defined online for each test point. However, covariance hyperparameters are estimated only for a subset of all possible local neighbourhoods. Different local experts are then combined using a mixture model capable of handling multimodality. Our idea of using adaptive nearest neighbours in GP regression is inspired by techniques in classification designed to mitigate the curse of dimensionality \citep{Hastie1996}. For each test point, we fit two models. In the first instance, the neighbourhood is determined using the Euclidean metric. Lengthscales estimated from the first fitting are then used to redefine the distance measure determining the neighbourhood for fitting the second model. Experiments suggest that this approach improves prediction significantly in data with nonstationarities, as hyperparameters are allowed to vary across neighbourhoods adapted to each query point. Weighting dimensions according to lengthscales downweights variables of little relevance and also leads to automatic variable selection. Our approach differs from methods where local neighbourhoods are built sequentially to optimize the choice of the neighbourhood. Examples include \cite{Vecchia1988} and \cite{Stein2004}, where the Gaussian likelihood is approximated by the use of an ordering and conditioning on a subset of past observations. In \cite{Gramacy2014}, an empirical Bayes mean-square prediction error criterion is optimized. While greedy searches usually rely on fast updating formulae available only in the Gaussian case, our approach works in non-Gaussian settings as well. \cite{Stein2004} suggest making neighbourhoods non-local to improve learning of covariance parameters, but local neighbourhoods may work better when the motivation is to handle nonstationarity. \cite{Lindgren2011} make a connection between discrete spatial Markov random fields and continuous Gaussian random fields with covariance functions in the Mat\'{e}rn class.

For fitting the sparse spectrum GP regression model, we derive a variational Bayes \citep[VB,][]{Attias1999} algorithm that uses nonconjugate variational message passing \citep{Knowles2011} to derive fast and efficient updates. VB methods approximate the intractable posterior in Bayesian inference by a factorized distribution. This product density assumption is often unrealistic and can lead to underestimation of posterior variance \citep{Wang2005}. However, optimization of a factorized variational posterior can be decomposed into local computations that only involve neighbouring nodes in the factor graph and this often gives rise to fast computational algorithms. VB has also been shown to be able to give reasonably good estimates of the marginal posterior distributions and excellent predictive inferences \citep[e.g.][]{Blei2006, Braun2010}. Variational message passing \citep{Winn2005} is a general-purpose algorithm that allows VB to be applied to conjugate-exponential models \citep{Attias2000}. Nonconjugate variational message passing extends variational message passing to nonconjugate models by assuming that the factors in VB are members of the exponential family. We use nonconjugate variational message passing to derive efficient updates for the variational posteriors of the lengthscales, which are assumed to be Gaussian. \cite{Ren2011} use VB for spatial modelling via GP, where they also treat uncertainty in the covariance function hyperparameters. However, they propose using importance sampling within each VB iteration to handle the intractable expectations associated with the covariance function hyperparameters. Variational inference has also been considered in machine learning for sparse GPs that select the inducing inputs and hyperparameters by maximizing a lower bound to the exact marginal likelihood \citep{Titsias2009}, and heteroscedastic GP regression models where the noise is input dependent \citep{Lazaro2011}.

VB is known to suffer from slow convergence when there is strong dependence between variables in the factors. To speed up convergence, \cite{Qi2006} propose parameter expanded VB to reduce coupling in updates, while \cite{Tan2013} considered partially noncentered parametrizations. Here, we introduce an adaptive strategy to accelerate convergence in nonconjugate variational message passing, which is inspired by adaptive overrelaxed bound optimization methods \citep{Salak2003}. Previously, \cite{Tan2014} showed that nonconjugate variational message passing is a natural gradient ascent algorithm with step size one and step sizes smaller than one correspond to damping. Here, we propose using step sizes larger than one which can help to accelerate convergence in fixed point iterations algorithms \citep[see][]{Huang2005}. Instead of searching for the optimal step size, we use an adaptive strategy which ensures that the lower bound increases after each cycle of updates. Empirical results indicate significant speedups. \cite{Honkela2003} considered combining parameter-wise updates to form a diagonal direction for a line search. A general iterative algorithm for computing VB estimators (defined as means of variational posteriors) has also been proposed by \cite{Wang2006} and its convergence properties investigated for normal mixture models.

Section \ref{SSGPmodel} describes the sparse spectrum GP regression model and Section \ref{variational inference} develops the nonconjugate variational message passing algorithm for fitting it. Section \ref{adaptive strategy} presents an adaptive strategy for accelerating convergence in nonconjugate variational message passing. Section \ref{pred distn} discusses how the predictive distribution can be estimated and the measures used for performance evaluation. Section \ref{neigh} describes the adaptive neighbourhood approach for local regression. Section \ref{eg} considers examples including real and simulated data and Section \ref{conclusion} concludes.

\section{Sparse spectrum Gaussian process regression}\label{SSGPmodel}
Given a data set $\{(x_i,y_i)|i=1,\dots,n\}$, we assume each output $y_i\in \Re$ is generated by an unknown latent function $f$ evaluated at the input, $x_i \in \Re^d$, and independently corrupted by additive Gaussian noise such that
\begin{equation*}
y_i=f(x_i)+\epsilon_i, \quad  \epsilon_i \sim N(0,\gamma^2).
\end{equation*}
A GP prior is assumed over $f(x)$ for $x \in \Re^d$. For any set of inputs $\{x_i|i=1,\dots,n\}$, $[f(x_1),\dots,f(x_n)]^T$ has a joint Gaussian distribution, $N(0,K)$, where $K$ is a covariance matrix. We assume that the mean of the process is zero. It is straightforward to allow for a nonzero mean, but a zero mean is sufficient for the examples in this paper. The entries of $K$ are given by $K_{ij}=E\{f(x_i)f(x_j)\}=k(x_i,x_j)=k(h)$, where $h=(x_i-x_j) \in \Re^d$ and $k$ is some stationary covariance function. For example, we consider the stationary squared exponential covariance function,
\begin{equation}\label{sqexpcov}
k(h)=\sigma^2\exp(-\tfrac{1}{2}h^T\Lambda h), 
\end{equation}
where $\sigma^2>0$, $\Lambda = \text{diag} ([\lambda_1^2, \dots, \lambda_d^2]^T)$ and $\lambda_l\geq 0$ for $l=1,\dots,d$. 

\citet{Lazaro2010} introduced a novel perspective on GP approximation by sparsifying the spectrum of the covariance function. They considered the linear regression model, 
\begin{equation}\label{spec}
f(x) \approx \sum_{r=1}^m \left\{a_r\cos(2\pi s_r^T x)+b_r \sin(2\pi s_r^T x)\right\}, 
\end{equation}
where $a_r$, $b_r$ are independent and identically distributed as $N(0,\frac{\sigma^2}{m})$ and $s_r$ is a $d$-dimensional vector of spectral frequencies. The power spectral density of a stationary covariance function $k$ is 
\begin{equation}\label{spectral density}
S_k(s)=\int_{\Re^d} \exp(-2\pi i s^T h)k(h)\;dh,
\end{equation}
and $S_k(s)$ is proportional to a probability density $p_k(s)$ such that $S_k(s)=k(0)p_k(s)$. When $\{s_1,\dots,s_m\}$ are drawn randomly from $p_k(s)$, \citet{Lazaro2010} showed that \eqref{spec} can be viewed as a sparse GP that approximates the full stationary GP by replacing the spectrum with a discrete set of spectral points. 

From  \eqref{spectral density}, the probability density $p_k(s)$ associated with the squared exponential covariance function in \eqref{sqexpcov} is $N(0,\frac{1}{4\pi^2}\Lambda^{-1})$. If $\{s_1,\dots,s_m\}$ is generated randomly from $N(0,I_d)$, then $\{\tfrac{1}{2\pi}\Lambda^{\frac{1}{2}}s_1,\dots,\tfrac{1}{2\pi}\Lambda^{\frac{1}{2}}s_m\}$ is a random sample from $p_k(s)$. From \eqref{spec}, a sparse GP approximation to $f(x)$ is 
\begin{align}
f(x)&\approx\sum_{r=1}^m \left\{a_r\cos(s_r^T \Lambda^{\frac{1}{2}} x)+b_r \sin(s_r^T \Lambda^{\frac{1}{2}} x)\right\} \nonumber \\
&=\sum_{r=1}^m \left[a_r\cos\{(s_r\odot x)^T\lambda \}+b_r \sin\{(s_r\odot x)^T\lambda \}\right], \label{spec2}
\end{align}
where $\Lambda^{\frac{1}{2}}=\text{diag}(\lambda)$, $\lambda = [\lambda_1, \dots, \lambda_d]^T$ is a vector of lengthscales and $\odot$ denotes element by element multiplication of two vectors. Within the sparse GP approximation, we can allow the components of $\lambda$ to be negative. Let $s=[s^1, \dots, s^d]^T$ and $x=[x^1, \dots, x^d]^T$. Note that in (\ref{sqexpcov}) $\lambda_j$ appears as its square in $\Lambda$ so that $k(h)$ remains positive semidefinite. Ignoring the non-negativity constraint allows us to use a Gaussian variational posterior for $\lambda$. The associated expectations in the variational lower bound can then be derived in closed form (see Section \ref{variational inference}). This is a highly novel aspect of our algorithm allowing a fast method that still handles covariance function hyperparameter uncertainty. This is especially important when fitting locally as described in Section 6 where training datasets may be small. The squared exponential covariance function also implements automatic relevance determination since the magnitude of $\lambda_j$ is a measure of how relevant the $j$th variable is. When $\lambda_j$ goes to zero, the covariance function becomes almost independent of the $j$th variable, essentially removing it from inference. See \cite{Rasmussen2006} for more discussion.

Using the stationary sparse GP approximation in (\ref{spec2}), we consider variational inference for
\begin{multline*}
y_i=\sum_{r=1}^m \left[a_r\cos\{(s_r\odot x_i)^T\lambda\}+b_r \sin\{(s_r\odot x_i)^T\lambda \}\right] \\ +\epsilon_i, \;\;\text{where}\;\; \epsilon_i \sim N(0,\gamma^2).
\end{multline*}
Let $\alpha=[a_1,\dots,a_m,b_1,\dots,b_m]^T$, $y=[y_1,\dots,y_n]^T$, $\epsilon=[\epsilon_1,\dots,\epsilon_n ]^T$ and $Z=[
Z_1,\dots,Z_n]^T$, where 
\begin{multline*}
Z_i=[ \cos\{(s_1\odot x_i)^T\lambda \},\dots,\cos\{(s_m\odot x_i)^T\lambda \}, \\ 
\sin\{(s_1\odot x_i)^T\lambda \},\dots,\sin\{(s_m\odot x_i)^T\lambda \}]^T.
\end{multline*}
Then this model can be written as 
\begin{equation}\label{model2}
y=Z\alpha + \epsilon, \quad \epsilon \sim N(0,\gamma^2 I_n), 
\end{equation}
where $\alpha\sim N(0,\frac{\sigma^2}{m}I_{2m})$.  For Bayesian inference, we assume the priors: $\lambda \sim N(\mu_\lambda^0, \Sigma_\lambda^0)$, $\sigma \sim \text{half-Cauchy}(A_\sigma)$ and $\gamma \sim \text{half-Cauchy}(A_\gamma)$, where the hyperparameters $\mu_\lambda^0$, $\Sigma_\lambda^0$, $A_\sigma$ and $A_\gamma$ are assumed to be known. The density function of a random variable $x$ distributed as $\text{half-Cauchy}(A)$ is $\tfrac{2A}{\pi(A^2+x^2)}$, where $x>0$ and $A>0$. While inverse-Gamma priors are more commonly used for variance parameters in hierarchical models due to the conditional conjugacy relationship with Gaussian families, \cite{Gelman2006} recommends use of the half-Cauchy family as priors because resulting inferences can be sensitive to inverse-Gamma hyperparameters when variance estimates are close to zero. We made the same observation in our experiments with inverse-Gamma priors for $\sigma^2$ and $\gamma^2$. In particular, predictive inferences are sensitive to inverse-Gamma priors in local regressions (see Section \ref{neigh}), where only a small neighbourhood is used for fitting at each test point.

\section{Variational inference}\label{variational inference}

We consider variational inference for the sparse spectrum GP regression model in (\ref{model2}). Let $\theta=\{\alpha,\lambda,\sigma,\gamma\}$ be the set of unknown parameters and $p(\theta|y)$ be the true posterior of $\theta$. In variational approximation, $p(\theta|y)$ is approximated by a $q(\theta)$ for which inference is more tractable, and the Kullback-Leibler divergence between $q(\theta)$ and $p(\theta|y)$ is minimized. This is equivalent to maximizing a lower bound $\mathcal{L}$ on the log marginal likelihood $\log p(y)$, where $p(y) = \int p(y,\theta)\,d\theta$,
\begin{equation}\label{LB}
\mathcal{L} = E_q\{\log p(y,\theta)\}-E_q\{\log q(\theta)\},
\end{equation}
and $E_q$ denotes expectation with respect to $q(\theta)$. 

Next, we review some important results in VB and nonconjugate variational message passing, which will be used to construct the variational algorithm. In VB, $q(\theta)$ is assumed to factorize into $\prod_{i=1}^M q_i(\theta_i)$ for some partition $\{\theta_1,\dots,\theta_M\}$ of $\theta$. The optimal densities may be obtained from
\begin{equation}\label{VB}
q_i(\theta_i) \propto \exp \{E_{-\theta_i}\log p(y,\theta)\}, \;\;  i=1,\dots,M,
\end{equation}
where $E_{-\theta_i}$ denotes expectation with respect to $\prod_{j\neq i}q_j({\theta_j})$ \citep[see, e.g.][]{Ormerod2010}. For conjugate-exponential models, the optimal densities have the same form as the priors and it suffices to update the parameters of $q_i$, such as in variational message passing \citep{Winn2005}. However, for nonconjugate models, the optimal densities will not belong to recognizable density families. Apart from the product assumption, nonconjugate variational message passing \citep{Knowles2011} further assumes each $q_i(\theta_i)$ is a member of some exponential family, that is, 
\begin{equation*}
q_i(\theta_i)=\exp\{\eta_i^T t_i(\theta_i)-h_i(\eta_i)\},
\end{equation*}
where $\eta_i$ is the vector of natural parameters and $t_i(\cdot)$ are the sufficient statistics. Hence, we only have to find each $\eta_i$ that maximizes the lower bound $\mathcal{L}$. Nonconjugate variational message passing can be interpreted as a fixed point iterations algorithm where updates are obtained from the condition that the gradient or natural gradient \citep[see][]{Amari1998, Hoffman2013} of $\mathcal{L}$ with respect to each $\eta_i$ is zero when $\mathcal{L}$ is maximized. Suppose $p(y,\theta)=\prod_a f_a(y,\theta)$, $S_a = E_q\{\log f_a(y,\theta)\}$ and let $\mathcal{V}_i(\eta_i)=\frac{\partial^2 h_i(\eta_i)}{\partial \eta_i \partial \eta_i^T}$ denote the variance-covariance matrix of $t_i(\theta_i)$. Provided $\mathcal{V}_i(\eta_i)$ is invertible, \cite{Tan2014} showed that the natural gradient of $\mathcal{L}$ with respect to $\eta_i$ is 
\begin{equation}\label{natgrad}
\tilde{\nabla}_{\eta_i}\mathcal{L} = \mathcal{V}_i(\eta_i)^{-1}\sum_{a \in N(\theta_i)} \frac{\partial S_a}{\partial \eta_i}- \eta_i.
\end{equation}
Therefore, the update for each $\eta_i$ is 
\begin{equation}\label{NCVMP_update}
\eta_i \leftarrow  \mathcal{V}_i(\eta_i)^{-1}\sum_{a \in N(\theta_i)} \frac{\partial S_a}{\partial \eta_i},
\end{equation}
where the summation is over all factors in $N(\theta_i)$, the neighbourhood of $\theta_i$ in the factor graph of $p(y,\theta)$. Updates in nonconjugate variational message passing reduce to those in variational message passing when the factors $f_a$ are conjugate \citep[see][]{Knowles2011, Tan2013}. However, unlike variational message passing, the lower bound $\mathcal{L}$ is not guaranteed to increase at each step and convergence problems may be encountered sometimes. \cite{Knowles2011} suggest using damping to fix convergence problems. 

When $q_i(\theta_i)=N(\mu_{\theta_i}^q,\Sigma_{\theta_i}^q)$, \cite{Wand2013}  showed that the update in \eqref{NCVMP_update} can be simplified to
\begin{equation}
\begin{aligned}\label{Gauss}
\Sigma_{\theta_i}^q &\leftarrow -\frac{1}{2}\bigg[\text{vec}^{-1}\bigg(\sum_{a \in N(\theta_i)} \negthickspace \frac{\partial S_a}{\partial  \text{vec}(\Sigma_{\theta_i}^q)}\bigg)\bigg]^{-1}, \\
\mu_{\theta_i}^q &\leftarrow \mu_{\theta_i}^q + \Sigma_{\theta_i}^q\sum_{a \in N(\theta_i)}\negthickspace \frac{\partial S_a}{\partial  \mu_{\theta_i}^q}. 
\end{aligned}
\end{equation}
Here $\text{vec}(A)$ denotes the vector obtained by stacking the columns of a matrix $A$ under each other, from left to right in order.

\subsection{Algorithm 1} \label{VASSGP}

We consider a variational approximation of the form 
\begin{equation}\label{pdt}
q(\theta)=q(\alpha)q(\lambda)q(\sigma,\gamma).
\end{equation}
From \eqref{VB}, the optimal densities $q(\alpha)$ and $q(\sigma,\gamma)$ are $q(\alpha) = N(\mu_{\alpha}^q, \Sigma_{\alpha}^q)$ and $q(\sigma,\gamma)=q(\sigma)q(\gamma)$, where
\begin{equation*}
\begin{aligned}
q(\sigma)&=\frac{\exp(-{C_\sigma^q}/{\sigma^2})}{\mathcal{H}(2m-2,C_\sigma^q,A_\sigma^2) \sigma^{2m}(A_\sigma^2+\sigma^2)},  \\
q(\gamma)&=\frac{\exp(-{C_\gamma^q}/{\gamma^2})}{\mathcal{H}(n-2,C_\gamma^q,A_\gamma^2) \gamma^{n}(A_\gamma^2+\gamma^2)},
\end{aligned}
\end{equation*}
and $\mathcal{H}(p,q,r)=\int_0^\infty  x^p\exp\{-qx^2-\log(r+x^{-2})\}\;dx$, $p\geq 0$, $r>0$. The variational parameter updates of $\mu_{\alpha}^q$, $\Sigma_{\alpha}^q$, $C_\sigma^q$ and $C_\gamma^q$ can also be derived from \eqref{VB}. As $\mathcal{H}(p,q,r)$ can be arbitrarily large or small, \citet{Wand2011} suggest evaluating $\log \mathcal{H}(p,q,r)$ efficiently using quadrature. A discussion can be found in Appendix B of \cite{Wand2011} and we follow their methods. For $q(\lambda)$, $p(y|\alpha,\lambda,\gamma)$ is not a conjugate factor and we use nonconjugate variational message passing. Assuming $q(\lambda)=N(\mu_\lambda^q,\Sigma_\lambda^q)$, updates for $\mu_\lambda^q$ and $\Sigma_\lambda^q$ can be derived using \eqref{Gauss} and matrix differential calculus \citep[see][]{Magnus1988}. The expectations with respect to $q$ in \eqref{Gauss} are given in Appendices A and B. Let $\vartheta=\{ \mu_{\alpha}^q, \Sigma_{\alpha}^q, \mu_\lambda^q, \Sigma_\lambda^q, C_\sigma^q, C_\gamma^q\}$ denote the set of variational parameters. An iterative scheme for finding $\vartheta$ is given in Algorithm \ref{Alg1}. 

\begin{Algorithm*}
\centering
\parbox{0.8\textwidth}{
\hrule
\vspace{1mm}
Initialize $\vartheta$.\\
Cycle
\begin{wideenumerate}
\item $\Sigma_\lambda^q \leftarrow$ $\left\{ {\Sigma_\lambda^0}^{-1} + (F_1+F_2) \mathcal{H}(n,C_\gamma^q,A_\gamma^2) / \mathcal{H}(n-2,C_\gamma^q,A_\gamma^2)  \right\}^{-1}$, where \\ [1mm]
$F_1=\sum_{i=1}^n\sum_{r=1}^m y_i\exp(-\tfrac{1}{2}t_{ir}^T\Sigma_\lambda^qt_{ir})\{{\mu_\alpha^q}_r \cos(t_{ir}^T\mu_\lambda^q)+{\mu_\alpha^q}_{m+r}\sin(t_{ir}^T\mu_\lambda^q)\}t_{ir}t_{ir}^T$, \\ [1mm]
$F_2 =-\frac{1}{4}\sum_{i=1}^n\sum_{r=1}^m\sum_{l=1}^m \Big[\nu_{irl}^- \big\{(A_{rl}+D_{rl})\cos({t_{irl}^-}^T\mu_\lambda^q)
+2B_{rl}\sin({t_{irl}^-}^T\mu_\lambda^q) \big\}t_{irl}^- {t_{irl}^-}^T  \\ [1mm]
+\nu_{irl}^+ \big\{(A_{rl}-D_{rl})\cos({t_{irl}^+}^T\mu_\lambda^q)
+2B_{rl}\sin({t_{irl}^+}^T\mu_\lambda^q) \big\}t_{irl}^+ {t_{irl}^+}^T \Big]$.
\item $\mu_\lambda^q \leftarrow$ $\mu_\lambda^q+\Sigma_\lambda^q \left\{ {\Sigma_\lambda^0}^{-1}(\mu_\lambda^0-\mu_\lambda^q)- \frac{1}{2} (F_3+F_4) \mathcal{H}(n,C_\gamma^q,A_\gamma^2) /  \mathcal{H}(n-2,C_\gamma^q,A_\gamma^2)    \right\}$, where\\ [1mm]
$F_3 = -2\sum_{i=1}^n\sum_{r=1}^m y_i \exp(-\tfrac{1}{2}t_{ir}^T\Sigma_\lambda^qt_{ir})\{{\mu_\alpha^q}_{r+m} \cos(t_{ir}^T\mu_\lambda^q)-{\mu_\alpha^q}_r\sin(t_{ir}^T\mu_\lambda^q)\}t_{ir}$ \\ [1mm]
$F_4 =\frac{1}{2}\sum_{i=1}^n\sum_{r=1}^m\sum_{l=1}^m \Big[\nu_{irl}^- \big\{2B_{rl}\cos({t_{irl}^-}^T\mu_\lambda^q)
-(A_{rl}+D_{rl})\sin({t_{irl}^-}^T\mu_\lambda^q) \big\}t_{irl}^-  \\ [1mm]
+\nu_{irl}^+ \big\{2B_{rl}\cos({t_{irl}^+}^T\mu_\lambda^q)
+(D_{rl}-A_{rl})\sin({t_{irl}^+}^T\mu_\lambda^q) \big\}t_{irl}^+ \Big]$.
\item $\Sigma_\alpha^q \leftarrow \left\{E_q(Z^TZ) \mathcal{H}(n,C_\gamma^q,A_\gamma^2)/ \mathcal{H}(n-2,C_\gamma^q,A_\gamma^2) +m I_{2m} \mathcal{H}(2m,C_\sigma^q,A_\sigma^2) / \mathcal{H}(2m-2,C_\sigma^q,A_\sigma^2)  \right\}^{-1}$ 
\item $\mu_\alpha^q \leftarrow  \Sigma_\alpha^q E_q(Z)^T y \; \mathcal{H}(n,C_\gamma^q,A_\gamma^2) / \mathcal{H}(n-2,C_\gamma^q,A_\gamma^2)$ 
\item $C_\sigma^q \leftarrow \frac{m}{2}\left\{{\mu_\alpha^q}^T\mu_\alpha^q+\text{tr}(\Sigma_\alpha^q)\right\}$ 
\item $C_\gamma^q \leftarrow \frac{1}{2} \left[y^Ty-2y^T E_q(Z)\mu_\alpha^q+\text{tr}\{(\mu_\alpha^q{\mu_\alpha^q}^T+\Sigma_\alpha^q)E_q(Z^TZ)\}\right]$ 
\end{wideenumerate}
until the increase in the lower bound $\mathcal{L}$ is negligible. 
\vspace{1mm}
\hrule}
\caption{Nonconjugate variational message passing algorithm for sparse spectrum GP regression model.}\label{Alg1}
\end{Algorithm*} 

A unique aspect of our variational scheme is the way covariance function uncertainty is handled, with the expectations involving $\lambda$ in the lower bound computable in closed form.  In particular, $E_q(Z)$ and $E_q(Z^TZ)$ can be evaluated in closed form (see Appendix A). Let $\mu_\alpha^q{\mu_\alpha^q}^T +\Sigma_\alpha^q$ be partitioned as $\left[\begin{smallmatrix} A & B^T \\ B & D \end{smallmatrix}\right]$ where $A$, $B$ and $D$ are all $m \times m$ matrices. In algorithm 1, we define $t_{ir}=s_r \odot x_i$,
\begin{equation*}
\begin{aligned}
t_{irl}^- &=t_{ir}-t_{il} \\
t_{irl}^+ &=t_{ir}+t_{il}
\end{aligned} \quad \text{and} \quad
\begin{aligned}
\nu_{irl}^- &= \exp(-\tfrac{1}{2}{t_{irl}^-}^T\Sigma_\lambda^qt_{irl}^-) \\
\nu_{irl}^+ &= \exp(-\tfrac{1}{2}{t_{irl}^+}^T\Sigma_\lambda^qt_{irl}^+)
\end{aligned}
\end{equation*}
for $i=1,\dots,n$, $r=1,\dots,m$, $l=1,\dots,m$. 

The lower bound $\mathcal{L}$ defined in \eqref{LB} is commonly used for monitoring convergence. It can be evaluated in closed form (see Appendix B) and is given by
\begin{equation}\label{LBexp}
\begin{aligned}
\mathcal{L} & =m\log m+\log({4A_\sigma A_\gamma}/{\pi^2})+\tfrac{1}{2}\log|{\Sigma_\lambda^0}^{-1}\Sigma_\lambda^q| \\
&\quad  -\tfrac{1}{2}(\mu_\lambda^q-\mu_\lambda^0)^T{\Sigma_\lambda^0}^{-1} (\mu_\lambda^q-\mu_\lambda^0) -\tfrac{1}{2}\text{tr}({\Sigma_\lambda^0}^{-1}\Sigma_\lambda^q) \\
&\quad +\tfrac{1}{2} \log|\Sigma_\alpha^q|+\log\mathcal{H}(n-2,C_\gamma^q,A_\gamma^2) \\
&\quad +\log\mathcal{H}(2m-2,C_\sigma^q,A_\sigma^2)+m+ \tfrac{d}{2}-\tfrac{n}{2}\log(2\pi).  
\end{aligned}
\end{equation}
The above expression applies only after the updates in steps 5 and 6 of Algorithm \ref{Alg1} have been made.

\section{Adaptive nonconjugate variational message passing} \label{adaptive strategy}
In the sparse spectrum GP regression model \eqref{model2}, $Z$ and $\alpha$ are intimately linked. Each time the lengthscales $(\lambda)$ are changed by a small amount, the amplitudes ($\alpha$) will have to respond to this change in order to match the observed $y$. In \eqref{pdt}, we have assumed that the variational posteriors of $\lambda$ and $\alpha$ are independent so that expectations with respect to $q$ are tractable and closed form updates can be derived for a fast algorithm. However, strong dependence between $\lambda$ and $\alpha$ implies that only small steps can be taken in each cycle of updates and a large number of iterations will likely be required for Algorithm \ref{Alg1} to converge. 

To accelerate convergence, we propose modifying the updates in steps 1 and 2. Let $\eta_\lambda$ be the natural parameter of $q(\lambda)$ and $\hat{\eta}_\lambda$ be the update of $\eta_\lambda$ in nonconjugate variational message passing. \cite{Tan2014} showed that nonconjugate variational message passing is a natural gradient ascent method with step size one. At iteration $t$, we consider 
\begin{equation}\label{step size update}
\begin{aligned}
\eta_\lambda^{(t)} &= \eta_\lambda^{(t-1)} + a_t \tilde{\nabla}_{\eta_\lambda}\mathcal{L}|_{\eta_\lambda=\eta_\lambda^{(t-1)}} \\
&=\eta_\lambda^{(t-1)} + a_t \left(\hat{\eta}^{(t)} _\lambda- \eta_\lambda^{(t-1)} \right)\;\; (\text{from}\;\; \eqref{natgrad})
\end{aligned}
\end{equation}
where $\hat{\eta}_\lambda^{(t)} = \mathcal{V}_\lambda(\eta_\lambda^{(t-1)})^{-1}\sum_{a \in N(\lambda)} \frac{\partial S_a}{\partial \eta_\lambda} \big|_{\eta_\lambda^{(t-1)}}$. When $a_t=1$, \eqref{step size update} reduces to the update in nonconjugate variational message passing. Taking $a_t<1$ may be helpful when updates in nonconjugate variational message passing fail to increase $\mathcal{L}$. From our experiments, instability in Algorithm 1 usually occur within the first few iterations. Beyond that, the algorithm is usually quite stable and taking larger steps with $a_t>1$ can result in significant speed-ups.

\begin{figure}
\centering
\includegraphics[width=0.49\textwidth]{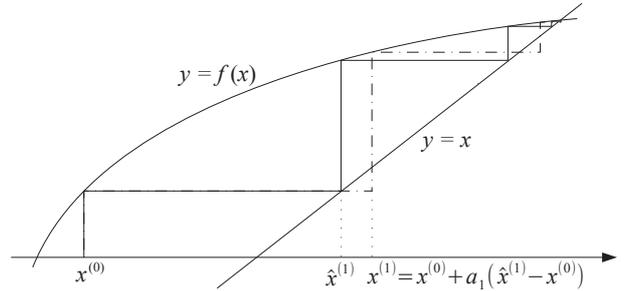} 
\caption{Solid line starting from $x^{(0)}$ indicates conventional path in fixed point iterations while the dot dash line indicates path to convergence with a step size greater than 1.}
\label{fixedpointplot}
\end{figure}

Recall that nonconjugate variational message passing is a fixed point iterations algorithm. Figure \ref{fixedpointplot} illustrates in a single variable case (where we are solving $x=f(x)$) how taking steps larger than one can accelerate convergence. Instead of taking $x^{(t)}=f( x^{(t-1)})$, consider $x^{(t)}=x^{(t-1)}+a_t(\hat{x}^{(t)}-x^{(t-1)})$, where $\hat{x}^{(t)}=f( x^{(t-1)})$ and $a_t>1$. The solid line starting from $x^{(0)}$ indicates the conventional path in fixed point iterations while the dot dash line indicates the path with a step size greater than 1. The dot dash line moves towards the point of convergence faster than the solid line. However, it may overshoot if $a_t$ is too large. In Algorithm \ref{Alg2}, we borrow ideas from \cite{Salak2003} to construct an adaptive algorithm where $a_t$ is allowed to increase by a factor $\rho>1$ after each cycle of updates whilst $\mathcal{L}$ is on an increasing trend and we revert to $a_t=1$ when $\mathcal{L}$ decreases. 

The adaptive nonconjugate variational message passing algorithm is given in Algorithm \ref{Alg2}. 
\begin{Algorithm}
\centering
\parbox{0.48\textwidth}{
\hrule
\vspace{1mm}
Initialize $\vartheta^{(0)}$. Set $t=0$ and $a_0=1$. \\
While $\delta > \text{tolerance}$ and $t< \text{maximum number of iterations}$,
\begin{wideenumerate}
\item $t \leftarrow t+1$.
\item Compute $F_5 = {\Sigma_\lambda^0}^{-1} + \frac{\mathcal{H}(n,C_\gamma^q,A_\gamma^2)}{\mathcal{H}(n-2,C_\gamma^q,A_\gamma^2)}(F_1+F_2)$ and \\ $F_6 =  {\Sigma_\lambda^0}^{-1}(\mu_\lambda^0-\mu_\lambda^q)-\frac{\mathcal{H}(n,C_\gamma^q,A_\gamma^2)}{2\mathcal{H}(n-2,C_\gamma^q,A_\gamma^2)}(F_3+F_4)$.
\item 
\begin{wideenumerate}
\item Compute  $\Sigma_\lambda^q \leftarrow \left[ (1-a_t) {\Sigma_\lambda^q}^{-1} + a_t F_5 \right]^{-1} $. 
\item If $\Sigma_\lambda^q$ is symmetric positive definite, proceed to step 4.
Else, $a_t \leftarrow a_t/\rho$ and return to step 3(a).
\end{wideenumerate}
\item $\mu_\lambda^q \leftarrow$ $\mu_\lambda^q+a_t\Sigma_\lambda^q F_6$. 
\item Compute updates in steps 3--6 of Algorithm \ref{Alg1}. 
\item 
\begin{wideenumerate}
\item Compute $\delta=\mathcal{L}|_{\vartheta^{(t)}}-\mathcal{L}|_{\vartheta^{(t-1)}}$.
\item If $\delta>0$, $a_t=\rho \,a_{t-1}$  and return to step 1. Else, $a_t \leftarrow 1$, $t \leftarrow t+1 $ and return to step 3.
\end{wideenumerate}
\end{wideenumerate}
\hrule}
\caption{Adaptive nonconjugate variational message passing algorithm for sparse spectrum GP regression model.}\label{Alg2}
\end{Algorithm} 
In Appendix C, we show that \eqref{step size update} reduces to the updates:
\begin{multline}\label{simplified adaptive updates}
\Sigma_\lambda^q  \leftarrow \bigg[ (1-a_t) {\Sigma_\lambda^q}^{-1} -2a_t \text{vec}^{-1}\bigg(\sum_{a \in N(\lambda)} \frac{\partial S_a}{\partial  \text{vec}(\Sigma_\lambda^q)}\bigg) \bigg]^{-1}  \\
\text{and} \;\; \mu_\lambda^q \leftarrow \mu_\lambda^q + a_t\,  \Sigma_\lambda^q \sum_{a \in N(\lambda)} \frac{\partial S_a}{\partial  \mu_\lambda^q}. 
\end{multline}
Step 3(b) has been added as a safeguard as the updated $\Sigma_\lambda^q$ may not be symmetric positive definite due to rounding errors or when $a_t$ is large. In this case, we propose reducing the step size by a factor $\rho$ until all eigenvalues of $\Sigma_\lambda^q$ are positive. It is useful to insert step 3(b) in Algorithm \ref{Alg1} after $\Sigma_\lambda^q$ has been updated as well as it can serve as damping. For both Algorithms \ref{Alg1} and \ref{Alg2}, we initialize $\mu_\lambda^q$ as $[0.5,\dots,0.5]^T$ (which is one half of the amplitudes of the inputs after any rescaling), $\Sigma_\lambda^q$ as $\text{diag}[0.5,\dots,0.5]^T$, $C_\gamma^q$ as $(\frac{n}{2}-1) \cdot \text{var}(y)/4$, $C_\sigma^q$ as $(m-1) \cdot \text{var}(y)$, and $\mu_\alpha^q$ and $\Sigma_\alpha^q$ are initialized using the updates in steps 3--4 of Algorithm \ref{Alg1}. We set the maximum number of iterations as 500 and the algorithms are deemed to have converged if the relative increase in $\mathcal{L}$ is less than $10^{-6}$. \cite{Salak2003} recommend taking the factor $\rho$ to be close to but more than 1. Using this as a guide, we have experimented with $\rho$ taking values 1.1, 1.5 and 2. While all these values lead to improvement in efficiency, we find $\rho=1.5$ to be more favourable, as the step sizes increase rather slowly when $\rho=1.1$ and too fast when $\rho=2$, leading to many failed attempts to improve $\mathcal{L}$. While Algorithm \ref{Alg2} does not necessarily converge to the same local mode as Algorithm \ref{Alg1}, results from the two algorithms are usually very close. Algorithm \ref{Alg2} sometimes demonstrates the ability to avoid local modes with the larger steps that it takes. We compare and quantify the performance of the two algorithms in Section \ref{pendulum eg}. Note that in Algorithm 2, each failed attempt to improve $\mathcal{L}$ is also counted as an additional iteration in step 5(b) even though step 1 does not have to be reevaluated. 

We note that Algorithms 1 and 2 are not guaranteed to converge due to the fixed point updates in nonconjugate variational message passing. However, convergence issues can usually be mitigated by rescaling variables and varying the initialization values. As the fixed point updates may not result in an increase in $\mathcal{L}$, it is possible to compute $\mathcal{L}$ after performing the updates and reduce $a_t$ if necessary. However, this requires computing a lower bound of a more complex form than \eqref{LBexp} at each iteration. Our experiments indicate that a decline in $\mathcal{L}$ is often due to $\Sigma_\lambda^q$ not being symmetric positive definite, and hence installing step 3(b) suffices in most cases. We also find that checking the simplified form of $\mathcal{L}$ in \eqref{LBexp} at the end of each cycle and simply reverting $a_t$ to 1 if necessary is more economical. If premature stopping occurs in Algorithms 1 or 2 due to a decrease in the lower bound at some iteration, this can be detected by examination of the lower bound values and remedied if needed by damping where values $a_t<1$ are considered.  

\section{Predictive distribution and performance evaluation} \label{pred distn}
Let $D =\{(x_i,y_i)|i=1,\dots,n\}$ and $T =\{(x_j^*,y_j^*)|j=1,\dots,n^*\}$ be the training and testing data sets respectively. Let $S=\{s_1,\dots,s_m\}$ be the set of spectral frequencies randomly generated from $N(0,I_d)$. Bayesian predictive inference is based on the predictive distribution, 
\begin{multline*}
p(y_j^*|x_j^*,S,D)
=\int p(y_j^*|x_j^*,S,\alpha,\lambda,\gamma) \\ \cdot p(\alpha,\lambda,\gamma|D,S) \; d\alpha\;d\lambda\;d\gamma,
\end{multline*}
assuming $y_j^*$ is conditionally independent of $D$ given $\alpha$, $\lambda$ and $\gamma$. We replace $p(\alpha,\lambda,\gamma|D)$ with our variational approximation $q(\alpha,\lambda,\gamma)=q(\alpha)q(\lambda)q(\gamma)$ so that 
\begin{multline}\label{postpred}
p(y_j^*|x_j^*,S,D) \approx \int p(y_j^*|x_j^*,S,\alpha,\lambda,\gamma) \\ \cdot q(\alpha)q(\lambda)q(\gamma) \; d\alpha\;d\lambda\;d\gamma. 
\end{multline}
From (\ref{postpred}), the posterior predictive mean of $y_j^*$ is 
\begin{multline*}
\mu_j^* = \int y_j^* \; p(y_j^*|x_j^*,S,D)  \; dy_j^* \\
\approx E_q\left\{ \int y_j^*\; p(y_j^*|x_j^*,S,\alpha,\lambda,\gamma)  \; dy_j^* \right \} =E_q(Z_j^*)^T \mu_\alpha^q,
\end{multline*}
where 
\begin{multline*}
Z_j^*=[
\cos\{(s_1\odot x_j^*)^T\lambda \},\dots,\cos\{(s_m\odot  x_j^*)^T\lambda \}, \\
\sin\{(s_1\odot  x_j^*)^T\lambda \},\dots,\sin\{(s_m\odot  x_j^*)^T\lambda \}]^T
\end{multline*}
and $E_q(Z_j^*)$ can be computed using results in Appendix A. The posterior predictive variance is 
\begin{equation*}
\begin{aligned}
{\sigma_j^*}^2 &\approx \int {y_j^*}^2 \; p(y_j^*|x_j^*,S,D)  \; dy_j^* - \{E_q(Z_j^*)^T \mu_\alpha^q \}^2  \\
&\approx E_q\{ \gamma^2+ (Z_j^*\alpha)^2 \} -{\mu_\alpha^q}^TE_q(Z_j^*)E_q(Z_j^*)^T\mu_\alpha^q \\
&=\tfrac{\mathcal{H}(n-4,C_\gamma^q,A_\gamma^2)}{\mathcal{H}(n-2,C_\gamma^q,A_\gamma^2)}+\text{tr}\{(\mu_\alpha^q{\mu_\alpha^q}^T+\Sigma_\alpha^q)E_q({Z_j^*}^TZ_j^*)\} \\
& \quad -{\mu_\alpha^q}^TE_q(Z_j^*)E_q(Z_j^*)^T\mu_\alpha^q.
\end{aligned}
\end{equation*}

In the examples, we follow \citet{Lazaro2010} and evaluate performance using two quantitative measures: normalized mean square error (NMSE) and mean negative log probability (MNLP). These are defined as 
\begin{equation*}
\begin{aligned}
\text{NMSE} &=\frac{\frac{1}{n^*}\sum_{j=1}^{n^*}(y_j^*-\mu_j^*)^2}{\frac{1}{n^*}\sum_{j=1}^{n^*}(y_j^*-\bar{y})^2} \;\; \text{and}\;\; \\
\text{MNLP} &=\frac{1}{2n^*}\sum_{j=1}^{n^*}\left\{ \frac{(y_j^*-\mu_j^*)^2}{{\sigma_j^*}^2}+\log {\sigma_j^*}^2 +\log(2\pi) \right\}.
\end{aligned}
\end{equation*}
The MNLP is implicitly based on a normal predictive distribution for $y_j^*$ with mean $\mu_j^*$ and variance ${\sigma_j^*}^2$, $j=1,\dots,n^*$.

\section{Adaptive neighbourhoods approach for predictive inference} \label{neigh}

We propose a new technique of obtaining predictive inference by fitting models locally using adaptive neighbourhoods. Our proposed approach consists of two stages:
For each test point $x_j^*$, $j=1,\dots,n^*$,
\begin{enumerate}
\item we first find the $k$ nearest neighbours of $x_j^*$ in $D$ (that are closest to $x_j^*$ in terms of Euclidean distance) and denote the index set of these $k$ neighbours by ${N}_1$. We use Algorithm 2 to fit a sparse spectrum GP regression model, $M_1$, to $\{(x_i,y_i)|i \in {N}_1\}$. 
\item Next, we use the variational posterior mean of the lengthscales, $\mu_\lambda^q$, from $M_1$ to define a new distance measure:
\begin{equation}\label{lengthscale_dist}
d(x_j^*,x_i)= \sqrt{(x_j^*-x_i)^T \text{diag}({\mu_\lambda^q}^2) (x_j^*-x_i)},
\end{equation}
where the dimensions are weighted according to ${\mu_\lambda^q}^2$. This will effectively downweight or remove variables of little or no relevance. Using this new distance measure, we find the $k$ nearest neighbours of $x_j^*$ in $D$ and denote the index set of these $k$ neighbours by ${N}_2$. We use Algorithm 2 to fit a sparse spectrum GP regression model, $M_2$, to $\{(x_i,y_i)|i \in {N}_2\}$ and use the variational posterior from $M_2$ for predictive inference.
\end{enumerate}
In summary, the first fitting $(M_1)$ is used to find out which variables are more relevant in determining the output. From \eqref{sqexpcov}, a large value of $\lambda_l$ indicates that the covariance drops rapidly along the dimension of $l$ and hence the neighbourhood should be shrunk along the $l$th dimension. Using $\mu_\lambda^q$ from the first fit as an estimate of the lengthscales, the neighbourhood is then adapted before performing a second fitting $(M_2)$ to improve prediction. We do not recommend iterating the fitting process further since this may result in cyclical behaviour with the neighbourhood successively expanding and contracting along a certain dimension as the iterations proceed.  In the examples, when the SSGP algorithm is implemented using this adaptive neighbourhood approach, we replace the variational posterior mean value $\mu_\lambda^q$ (which does not exist for the SSGP method since it does not estimate a variational posterior distribution for $\lambda$) by the point estimates of the lengthscales $\hat{\lambda}$ obtained by the SSGP approach.

The adaptive neighbourhood approach is well-placed to handle data with nonstationarities as stationarity is only assumed locally and local fitting can adapt the noise and the degree of smoothing to the nonstationarities. Adapting the neighbourhood can also be very helpful in improving prediction when there are many irrelevant variables due to automatic relevance determination implemented via the lengthscales. A major advantage of the variational approach is that it allows uncertainty in the covariance hyperparameters to be modelled within a fast computational scheme. This is especially important when fitting using local neighbourhoods as plug-in approaches to estimating hyperparameters will tend to underestimate predictive uncertainty when the data set is small. This approach is advantageous for dealing with large data sets as well. As we only consider fitting models to a small subset $k$ of data points at each test point, a smaller number of basis functions $(m)$ might suffice. While the computational requirements grow linearly with the number of prediction locations, this approach is trivially parallelizable to get a linear speed-up with the number of processors.

\section{Examples} \label{eg}

We compare the performance of the variational approach with the SSGP algorithm using three real data sets: the pendulum data set, the rainfall-runoff data set and the Auto-MPG data set. The implementation of SSGP in Matlab is obtained from \url{http://www.tsc.uc3m.es/~miguel/simpletutorialssgp.php}. There are two versions of the SSGP algorithm: SSGP (fixed) uses fixed spectral points while SSGP (optimized) optimizes the marginal likelihood with respect to the spectral points. We will only consider SSGP (fixed). We observe some sensitivity in predictive performance to the basis functions and adopt the following strategy for better results: for each implementation of Algorithm \ref{Alg1} (or \ref{Alg2}), we randomly generate ten sets of spectral points from $N(0,I_d)$, perform 2 iterations of the algorithm, and select the set with the highest attained lower bound to continue to full convergence. A similar strategy was used by \cite{Lazaro2010} to initialize the SSGP algorithm. Due to the zero mean assumption, we center all target vectors, $y$ by subtracting the mean $\bar{y}$ from $y$. In the examples, ``VA'' refers to the variational approximation approach implemented via Algorithm \ref{Alg2}, ``global''  refers to using the entire training set for fitting while ``local'' refers to the adaptive neighbourhood approach described in Section \ref{neigh}. 

\subsection{Pendulum Data Set} \label{pendulum eg}

The pendulum data set (available at \url{http://www.tsc.uc3m.es/~miguel/simpletutorialssgp.php}) has $d=9$ covariates and contains 315 training points and 315 test points. The target variable is the change in angular velocity of a simulated mechanical pendulum over 50 ms and the covariates consist of different parameters of the system. \cite{Lazaro2010} used this example to show that SSGP (optimized) can sometimes fail due to overfitting. We rescale the input variables in the training set to lie in $[-1,1]$ and consider the number of basis functions, $m \in \{10,25,50,100,200\}$. We compare the performance of Algorithm \ref{Alg2} with SSGP (fixed) using NMSE and MNLP values averaged over ten repetitions. We set $\rho=1.5$, $A_\sigma=A_\gamma=25$ for the half-Cauchy priors, following \cite{Gelman2006} and \cite{Wand2011} and $\mu_\lambda^0=0$, $\Sigma_\lambda^0=10I_d$ for the lengthscales in Algorithm \ref{Alg2}. 

For this data set which is quite small, we note that the adaptive neighbourhood approach did not yield significant improvements as all inputs are relevant and there is no strong nonstationarity. Hence we report only results for global fits, which are shown in Figure \ref{pendulumNMSEMNLP}. The NMSE and MNLP values produced by Algorithm \ref{Alg2} are comparable with that of SSGP (fixed) for small $m$ and are better for large $m$. On the whole, Algorithm \ref{Alg2} produces reasonably good NMSE performance and is less prone to overfitting than the SSGP algorithm. The ability of the variational approach to treat uncertainty in the covariance function hyperparameters reduces underestimation of predictive uncertainty, resulting in better MNLP performance.

\begin{figure}
\centering
\includegraphics[width=0.495\textwidth]{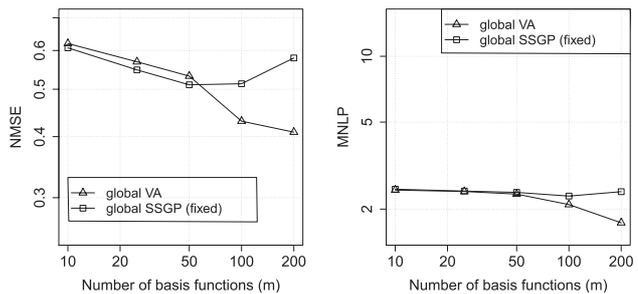} 
\caption{Pendulum data set. NMSE (left) and MNLP (right) values produced by Algorithm \ref{Alg2} (global VA) and global SSGP (fixed) and averaged over ten repetitions plotted against number of basis functions $(m)$.}
\label{pendulumNMSEMNLP} 
\end{figure}

\begin{figure}
\centering
\includegraphics[width=0.495\textwidth]{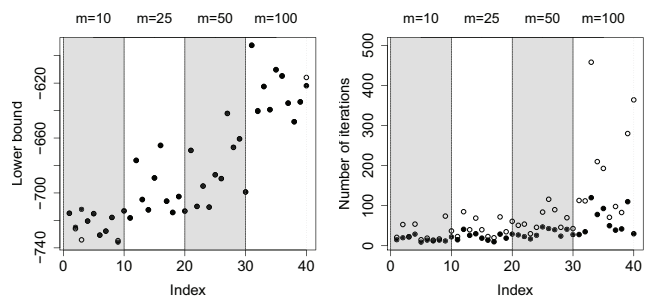} 
\caption{Pendulum data set. Left: Plot of lower bound attained at convergence against index of runs. Right: Plot of number of iterations required for convergence against index of runs. Solid circles correspond to Algorithm \ref{Alg2} with $\rho=1.5$ while empty circles correspond to Algorithm \ref{Alg1}.}
\label{pendulumLBcomparisons} 
\end{figure}
Next, we compare the performance of Algorithm \ref{Alg1} with Algorithm \ref{Alg2} both in terms of efficiency and the lower bound attained at convergence. We use Algorithm \ref{Alg1} to re-perform the runs for $m \in \{10,25,50,100\}$, using the same sets of spectral points that were used in Algorithm 2. These runs are indexed from 1 to 40 (there are ten repetitions for each $m$). Figure \ref{pendulumLBcomparisons} shows a plot of the lower bound attained at convergence on the left and a plot of the number of iterations required for convergence on the right for each of the 40 runs. Figure \ref{pendulumLBcomparisons} indicates that, except for runs 3 and 40, the lower bound attained by Algorithms \ref{Alg1} and \ref{Alg2} are almost indistinguishable. However, Algorithm \ref{Alg2} required a much smaller number of iterations to converge than Algorithm \ref{Alg1}. Excluding runs 3 and 40 where the lower bound attained by Algorithms \ref{Alg1} and \ref{Alg2} differs significantly, using Algorithm \ref{Alg2} instead of Algorithm \ref{Alg1} leads on average to a reduction of 49\% in the number of iterations required for convergence. The highest reduction observed is 84\% at run 9. At run 3, Algorithm \ref{Alg2} was able to escape a local mode and attained a higher lower bound at convergence. However, at run 40, it was caught in a local mode. We re-perform run 40 using $\rho=1.1$ and it turns out that Algorithm \ref{Alg2} was then able to attain the same lower bound as Algorithm \ref{Alg1} but in around half the number of iterations.

\begin{figure}
\centering
\includegraphics[width=0.495\textwidth]{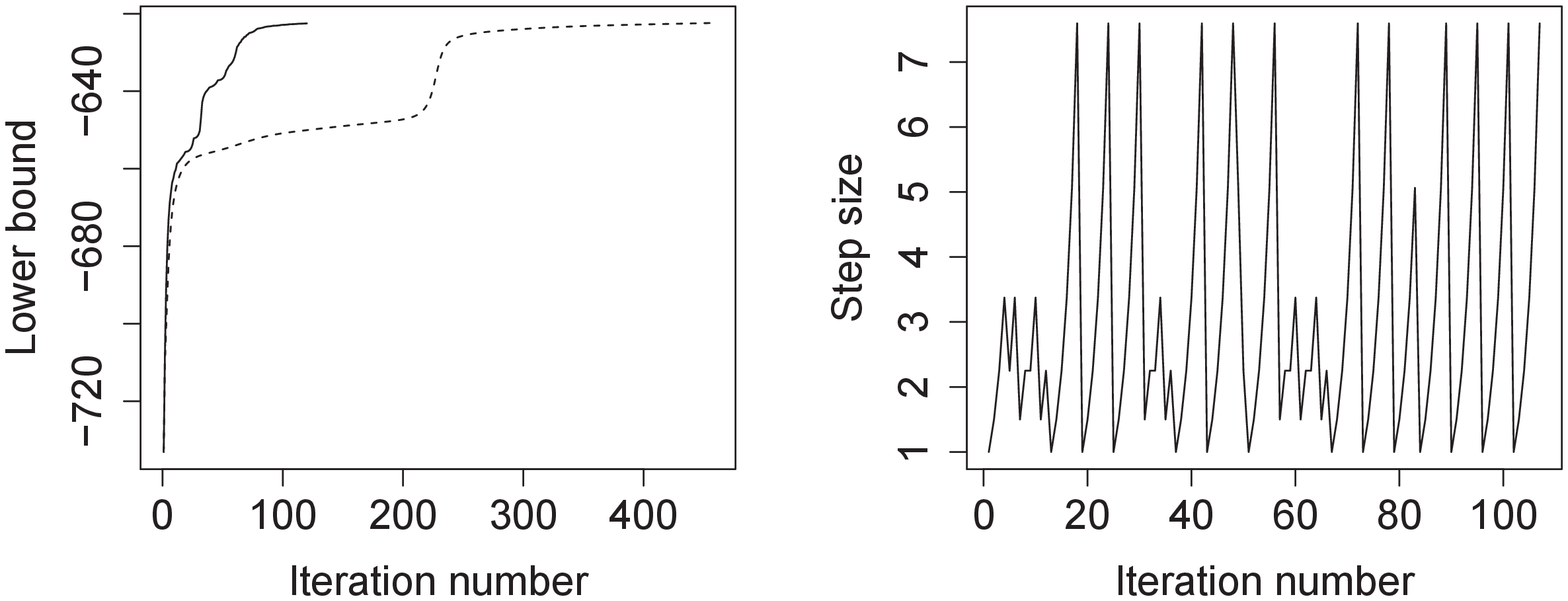} 
\caption{Pendulum data set. Run 33. Left: Plot of lower bound against iteration number (solid line corresponds to Algorithm \ref{Alg2} with $\rho=1.5$ while dashed line corresponds to Algorithm \ref{Alg1}). Right: Plot of the adaptive step size $(a_t)$ used in Algorithm \ref{Alg2} against iteration number $(t)$.}
\label{pendulumrun33} 
\end{figure}
The typical behaviour of Algorithm \ref{Alg2} is illustrated in Figure \ref{pendulumrun33}. On the left is a plot of the lower bound against iteration number and on the right is a plot of the adaptive step size $(a_t)$ used in Algorithm \ref{Alg2} against iteration number $(t)$ for run 33. The step size typically increases by a factor of 1.5 at each iteration but falls back to 1 when the lower bound fails to increase. The step size may also be reduced by factors of 1.5 due to the requirement that the covariance matrix be symmetric positive definite in step 2(b) of Algorithm \ref{Alg2}. The reduction in the number of iterations that Algorithm \ref{Alg2} takes to converge as compared to Algorithm \ref{Alg1} is 74\% for run 33.

\subsection{Performance of SSGP and VA with adaptive neighbourhood approach}
 
For the next four subsections, our discussion concerns performance of the adaptive neighbourhood approach. We compare the performance using two real datasets: the rainfall-runoff data set and the Auto-MPG data set. We fit these data globally using SSGP (fixed), VA and MCMC, and compare results with the adaptive neighbourhood approach, implemented using both SSGP (fixed) and Algorithm \ref{Alg2} with factor $\rho=1.5$. For the priors, we set $A_\sigma=A_\gamma=25$, $\mu_\lambda^0=0$. For $\Sigma_\lambda^0$, we set $\Sigma_\lambda^0 = 100I_d$ for the rainfall-runoff data where a less smooth mean function is expected and $\Sigma_\lambda^0 = I_d$ for the Auto-MPG data. The prior variance for the lengthscales can be chosen empirically by predictive performance on a test set or using prior knowledge. Prior knowledge about the hyperparameters in the covariance function can be elicited by thinking about the prior degree of expected correlation of the mean function for covariates separated by lag one in each dimension when the covariates are standardized. For both the global SSGP (fixed) and VA approach, we consider the number of basis functions $m \in \{20,40,60,80,100\}$. In addition, we generate ten artificial covariates on top of the existing covariates in both the rainfall-runoff and Auto-MPG data set to test the capability of Algorithm \ref{Alg2} in automatic relevance determination.

\subsection{Rainfall-runoff data}  \label{rainfall}

In this example, we consider data from a deterministic rainfall-runoff model, which is a simplification of the Australian Water Balance Model \citep[AWBM,][]{Boughton2004}. The AWBM estimates catchment streamflow using time series of rainfall and evapotranspiration data and is widely used in Australia for estimating catchment water yield or design flood estimation. The model has three parameters - the maximum storage capacity $S$, the base flow index \textit{BFI} and the baseflow recession factor $K$. We have model simulations for around eleven years of average monthly potential evapotranspiration and daily rainfall data for the Barrington River catchment, located in New South Wales, Australia. The model was run for 500 different values of the parameters $(S,K,BFI)$ generated using a maximin Latin hypercube design. This data contains 500 data points for each of 3700 days, with a total of 1.85 million data points. For each day, the total rainfall is also recorded. A subset of this data has been studied in \cite{Nott2012}. 

\begin{figure*} [htb!]
\centering
\includegraphics[width=0.66\textwidth, angle=270]{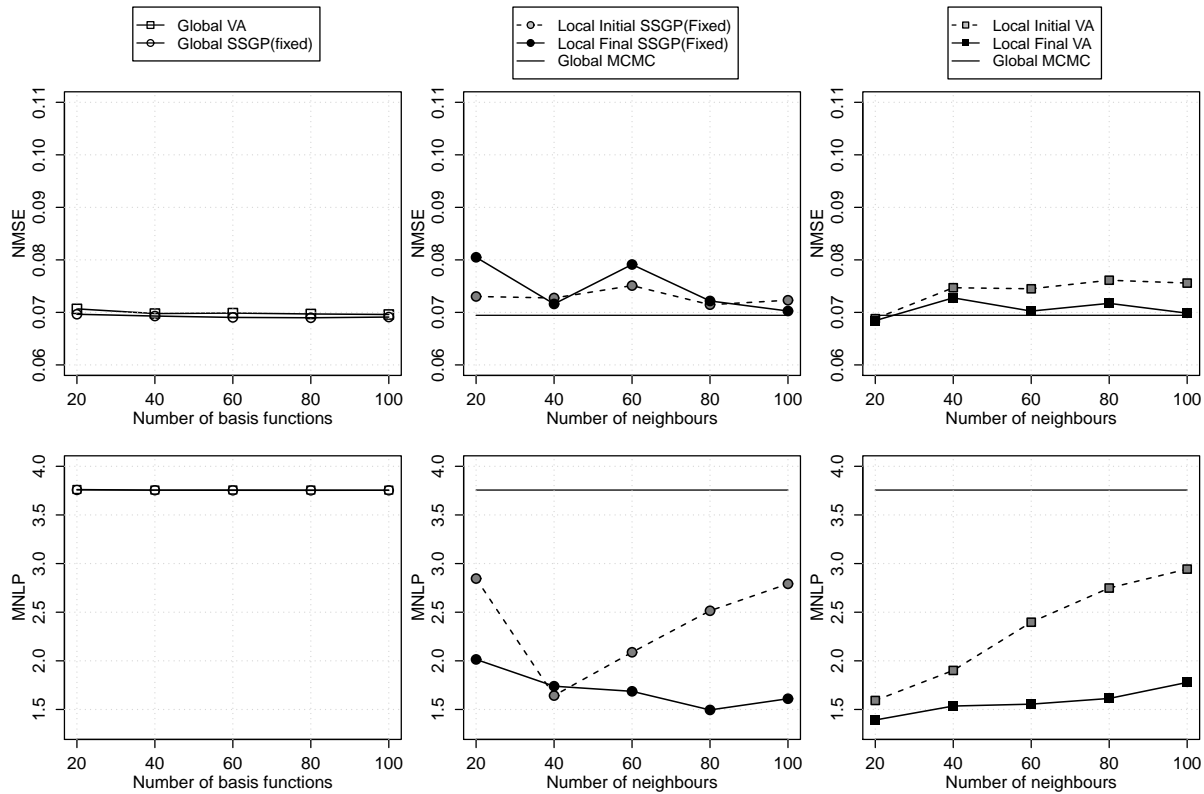} 
\caption{\label{rainfall2covMNLPNMSE} Rainfall-runoff data on day with peak rainfall. NMSE and MNLP values averaged over ten repetitions plotted against the number of basis functions (first column) and against the number of neighbours (second and third columns). Number of basis functions used in the local methods was 20.}
\end{figure*}

Even though the size of the data is large, the computational demands of the adaptive neighbourhood approach will depend mostly on the number of query points and the neighbourhood size. This makes our approach highly suitable for this data set. This is especially true since emulation of the model will be most interesting near values of peak rainfall input and generally for events of hydrological significance, where there might be a flood risk for example.  So the proportion of interesting query points in this example is a small fraction of the total data set size and furthermore we expect the model output to vary rapidly in some parts of the parameter space but very little in other parts so the ability of the local method to smooth adaptively is very attractive for this problem. We will consider prediction for the two days with the highest rainfall inputs. We take AWBM streamflow response as the target $y$, and $S$ and $K$ as covariates, omitting \textit{BFI}. A small amount of independent normal random noise with standard deviation 0.01 was added to $y$ to avoid degeneracies in regions of the space where the response tends to be identically zero. For each day, we randomly selected 100 data points as the test set and use the remaining 400 data points as the training set. These data are highly nonstationary with large flat regions, a few rapidly varying regions and the noise level changes a lot over the space. 

Figure \ref{rainfall2covMNLPNMSE} shows the NMSE and MNLP values averaged over ten repetitions for the rainfall-runoff data with peak rainfall. For global SSGP (fixed), we observe a slight improvement in NMSE values as $m$ increases, while MNLP values remain largely constant at around 3.75 even for large $m$. Due to the nonstationary nature of this data, a global stationary fit does very poorly in MNLP. For the adaptive neighbourhood approach, we consider neighbourhoods of size $k=20, 40, 60, 80, 100$, fixing the number of basis functions, $m=20$. For the local methods, the dotted lines correspond to results from the initial fitting where the $k$ nearest neighbours are determined based on Euclidean distance. The solid lines correspond to results from the final fit where the $k$ nearest neighbours are determined using the new distance measure with dimensions weighted according to the lengthscales. The improvement brought about by adapting the neighbourhood is more apparent in VA than in SSGP (fixed). 

\begin{figure*} [htb!]
\centering
\includegraphics[width=0.66\textwidth, angle=270]{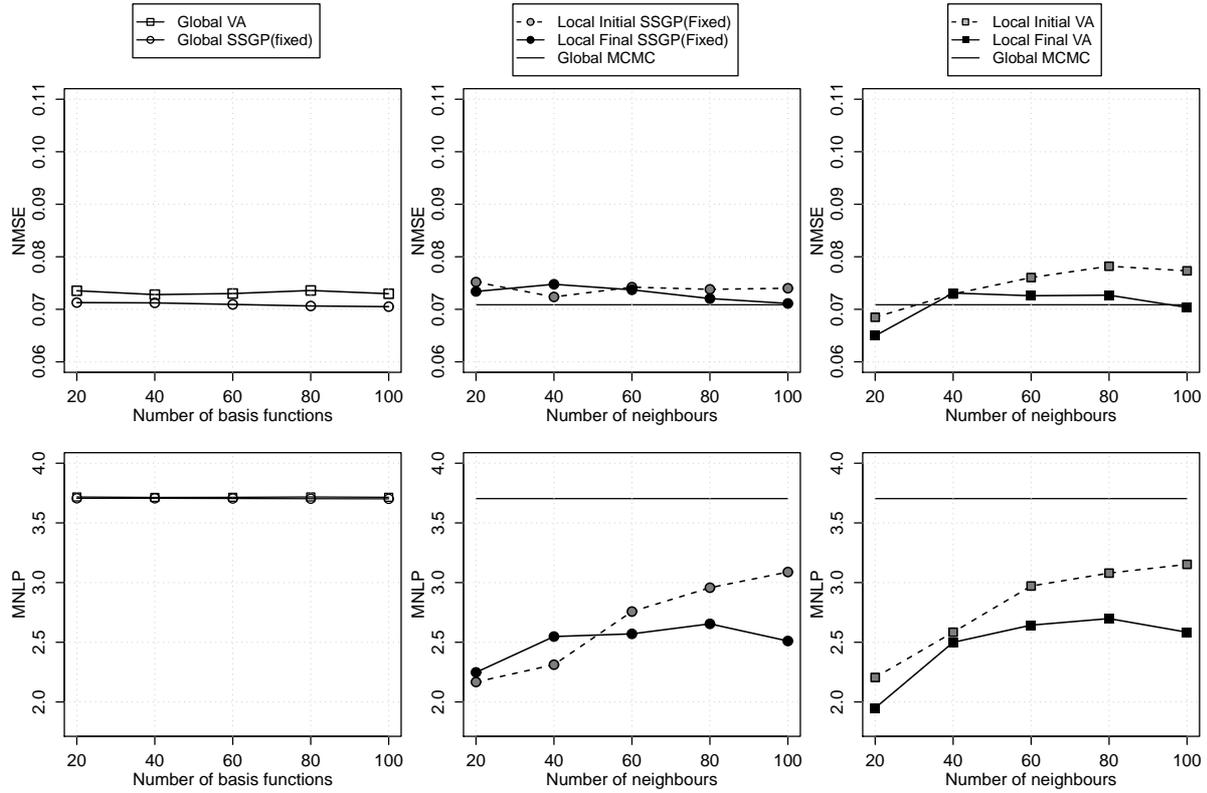} 
\caption{\label{rainfall2covMNLPNMSE_2} Rainfall-runoff data on day with second highest rainfall. NMSE and MNLP values averaged over ten repetitions plotted against the number of basis functions (first column) and against the number of neighbours (second and third columns). Number of basis functions used in the local methods was 20.}
\end{figure*}

Figure \ref{rainfall2covMNLPNMSE_2} shows the NMSE and MNLP values averaged over ten repetitions for the rainfall-runoff data with the second highest rainfall. In this example we also observe that a global stationary fit does very poorly in MNLP, again due to the nonstationary nature of the data. Similarly, when adapting the neighbourhood approach, there are greater improvements in VA than in SSGP (fixed). It is clear that the adaptive neighbourhood approach is critical for this data set where the mean function varies rapidly over some parts of the space but very little over other parts. The variational approach performs very well when using just a small neighbourhood about each test point both in terms of NMSE and MNLP.

\begin{figure}[htb!]
\centering
\includegraphics[width=0.495\textwidth]{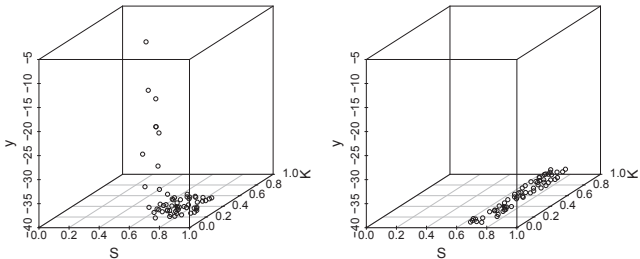} 
\caption{Rainfall-runoff data on day with peak rainfall. Plot of neighbourhood of test point determined using Euclidean distance (left) and new weighted distance measure (right). Circles denote neighbours and solid circle denotes test point.}
\label{s=68} 
\end{figure}
Figure \ref{s=68} illustrates how the neighbourhood of a test point changes from the initial to the final fit for the case $k=60$, when Algorithm \ref{Alg2} was being used. The plot on the left shows the neighbours (denoted by circles) of a test point (denoted by solid circle) determined using Euclidean distance. The plot on the right shows the neighbours of the same test point determined using the new distance measure. In this case, the component of $\mu_\lambda^q$ corresponding to the covariate $S$ is much larger than that corresponding to the covariate $K$, resulting in the neighbourhood being shrunk along the $S$ axis. The adapted neighbourhood leads to an improvement in the estimation of the predictive mean and especially the predictive variance of the test point.

\begin{table}[htb!]
\centering
  \begin{tabular}{|| c | c | c | c ||   } \hline
\multicolumn{4}{||c||}{Global approach}    \\ \hline \\[-2.5ex]
  m & VA & SSGP & MCMC (2000 iterations)  \\ \hline \\[-2.5ex]
  20 & 2.626 & 0.177 &  \\
  40 & 8.315 &  0.268 & \\
  60& 17.545 &  0.491 &  2068.596 \\
  80 & 32.920 &  0.638  &  \\
     100& 55.223 & 0.836 &  \\ \hline
    \multicolumn{4}{||c||}{Adaptive neighbourhood approach $(m=20)$} \\ \hline \\[-2.5ex]
  k & VA & SSGP & MCMC (2000 iterations) \\ \hline \\[-2.5ex]
20 & 257.773 & 16.361 &   653.690 \\
40 & 247.744 &  17.964 &  2253.453 \\
 60 & 265.194 & 22.695 & 5182.298 \\
80 & 320.416 & 23.091 & 9229.747 \\
100 & 279.356 & 24.774& 14441.828 \\ \hline
  \end{tabular}
  \caption{Computation times in seconds for VA, SSGP (fixed) and MCMC for rainfall-runoff data on day with peak rainfall.}
\label{Comtimes}
\end{table}
Table \ref{Comtimes} shows the computation times of the VA, SSGP (fixed) and MCMC algorithms on the rainfall-runoff data with peak rainfall input. We ran the MCMC using Rstan \citep{rstan-software:2014} on a dual processor Windows PC 3.30 GHz workstation and both SSGP (fixed) and VA in Matlab using a 3.2 GHz Intel Core I5 Quad Core iMac. For the global approach, computation times for SSGP (fixed) and VA are averaged over 10 repetitions, while MCMC is based on a single run with 2000 iterations.  For the adaptive neighbourhood approach, Table \ref{Comtimes} shows the total time it takes to run all 100 test points. Note that the timing for the local approaches can be significantly reduced by parallelizing. In terms of computation speed, SSGP (fixed) is the fastest followed by VA. We observe that MCMC is substantially slower than the other methods and the computation time increases significantly when the size of neighbourhood increases. We do not observe such significant increase in computation times for VA and SSGP (fixed).

\subsection{Rainfall-runoff with simulated data}\label{Rain_Sim}
We consider rainfall-runoff data on the day with peak rainfall and generate ten additional covariates artificially. As both covariates $S$ and $K$ lie in the interval $[0,1]$, we simulate each of the ten additional covariates randomly from the uniform distribution on the interval $[0,1]$. We compare the performance of SSGP (fixed)  and Algorithm \ref{Alg2} using a global fit with the adaptive neighbourhood approach. We set $\rho=1.5$ in Algorithm \ref{Alg2} and use the same priors as in Section \ref{rainfall}. For the global approach, we consider the number of basis functions, $m \in \{20,40,60,80,100\}$ while for the adaptive local neighbourhood approach, we consider neighbourhoods of size $k=20, 40, 60, 80, 100$, fixing the number of basis functions, $m=20$. The results are shown in Figure \ref{rainfall12covMNLPNMSE}. 

For the global approach, the results of SSGP (fixed) are quite similar to those in the 2 covariates case.  For the local approach, a small neighbourhood with $k=20$ does not work well for both SSGP (fixed) and VA, indicating that a larger neighbourhood is likely required for high dimensional problems. There is a clear improvement in the MNLP values from adapting the neighbourhood according to the lengthscales and Algorithm 2 achieved the lowest MNLP values among the methods that were studied, using a smaller neighbourhood. The MNLP values achieved by Algorithm \ref{Alg2} are close to those attained in Section \ref{rainfall}, indicating that the adaptive neighbourhood approach is effective in eliminating covariates of little relevance. The variational approach is able to provide significant improvement in this aspect and is much more robust to overfitting for small neighbourhoods. However, the NMSE values obtained in the adaptive neighbourhood approach are higher than those obtained in the global approach. Finally, we note that a good neighbourhood size is dependent on the number of covariance function parameters to be estimated and on the degree of nonstationarity, which is very much problem specific. Some experimentation with different neighbourhood sizes is probably necessary. 

\begin{figure*}[htb!]
\centering
\includegraphics[width=0.66\textwidth, angle=270]{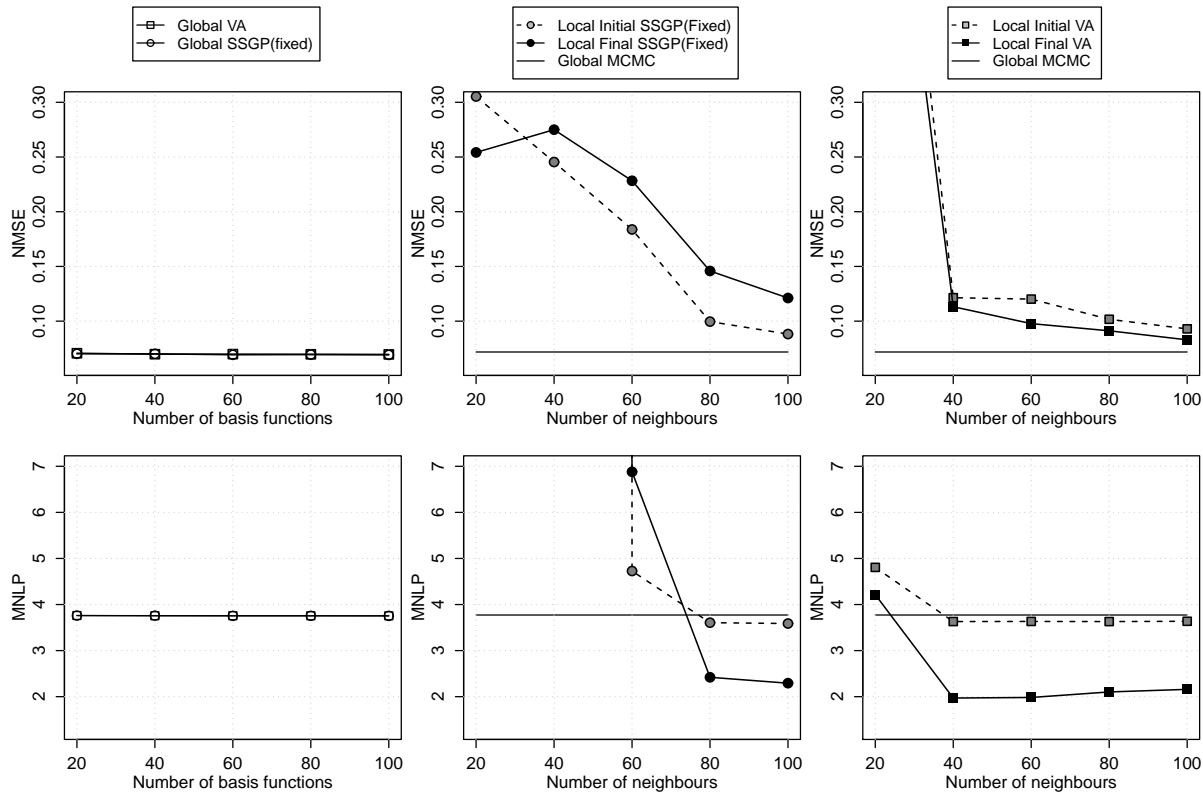} 
\caption{Rainfall-runoff simulated data. NMSE and MNLP values averaged over ten repetitions plotted against the number of basis functions (first column) and against the number of neighbours (second, third and fourth columns). Number of basis functions used in the local methods was 20.}
\label{rainfall12covMNLPNMSE}  
\end{figure*}

\subsection{Auto-MPG data} \label{Auto_section}

In this example, we consider the Automobile city-cycle fuel consumption in miles per gallon (Auto-MPG) data taken from the CMU Statistics library. This dataset was used in the 1983 American Statistical Association Exposition and is available at \url{http://archive.ics.uci.edu/ml/datasets.html}. The dataset contains 398 instances and nine attributes. \citet{Quinlan1993} used this data to predict the attribute ``MPG", which is the city-cycle fuel consumption in miles per gallon. The other eight attributes include two multi-valued discrete, four continuous attributes and two categorical variables. We drop the two categorical variables, car name and origin, and keep the four continuous attributes and two multi-valued discrete variables. Six of the data points are removed as they have missing entries in some of the input variables. We randomly select 80 data points as the test set, and use the remaining 312 data points as the training set.

\begin{figure*}[htb!]
\centering
\includegraphics[width=0.66\textwidth, angle=270]{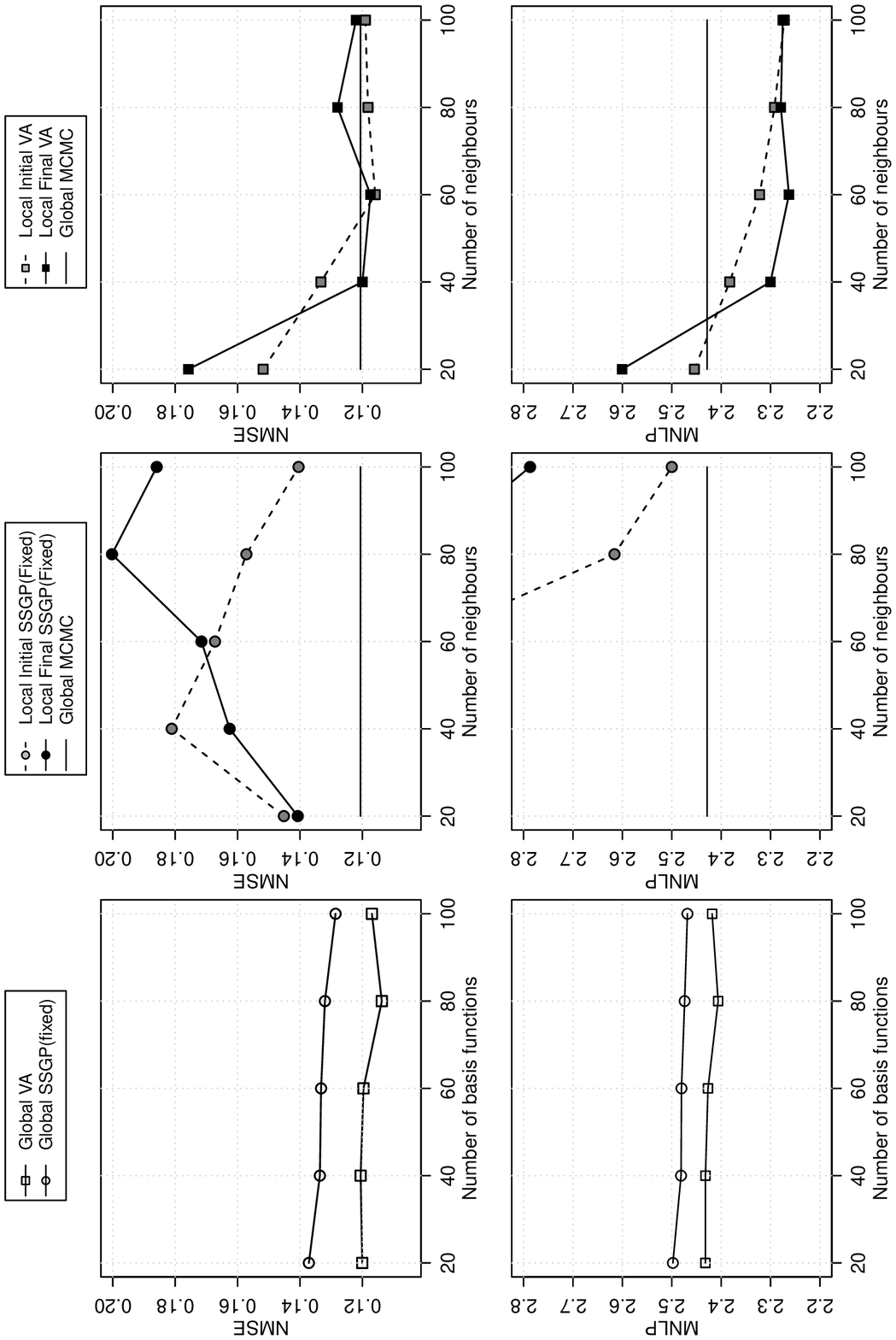} 
\caption{Auto-MPG data. NMSE and MNLP values averaged over ten repetitions plotted against the number of basis functions (first column) and against the number of neighbours (second and third  columns). Number of basis functions used in the local methods was 20.}
\label{auto}  
\end{figure*}

Figure \ref{auto} shows the NMSE and MNLP values averaged over ten repetitions. For the global SSGP (fixed) and VA methods, we observe slight improvements in both the NMSE and MNLP values as $m$ increases. The MNLP and NMSE values for MCMC and global VA are also better than for global SSGP (fixed). For the adaptive neighbourhood approach, we consider neighbourhoods of size $k = 20, 40, 60, 80, 100$, while fixing the number of basis function $m=20$. For local VA, the final fit is slightly better than the initial fit. There is an improvement brought about by adapting the neighbourhood as Figure \ref{auto} shows that the MNLP values of both the initial and final fits are lower than the MCMC method for neighbourhood size of 40, 60, 80 and 100. 

For local SSGP (fixed), we observe that their performance is worse than MCMC. Moreover, it seems that the initial fit is better than the final fit, for neighbourhood size of 80 and 100. This may be because the lengthscales are not accurate enough to be used for the final fit. We also examined the local SSGP approach with larger neighbourhood sizes of 150, 200 and 250. We found that, at a neighbourhood size of 150, the performance of the final fit of local SSGP (fixed) (MNLP and NMSE of 2.38 and 0.129 respectively) is slightly better than MCMC. Adapting the neighbourhood approach is still more apparent in the variational approach as it is able to achieve MNLP and NMSE value of 2.26 and 0.117 respectively at neighbourhood size of  60.

\subsection{Auto-MPG with simulated data}

We now consider the Auto-MPG data and look at the influence of irrelevant covariates on the model. This is again done by generating ten additional covariates artificially and randomly from the uniform distribution on the interval [0,1]. Once again, like the rainfall-runoff data, it seems that a larger neighbourhood is required to attain the best performance for the variational approach when irrelevant covariates are added. In this example, for the variational approach, we found that neighbourhood size of 100 produces the best peformance with MNLP and NMSE values of 2.30 and 0.127 respectively. Again, after examining the local SSGP (fixed) approach with larger neighbourhood sizes, we found that it attains the best performance (MNLP and NMSE of 2.43 and 0.141 respectively) at a neighbourhood size of 150.

\begin{figure*}[htb!]
\centering
\includegraphics[width=0.6\textwidth, angle=270]{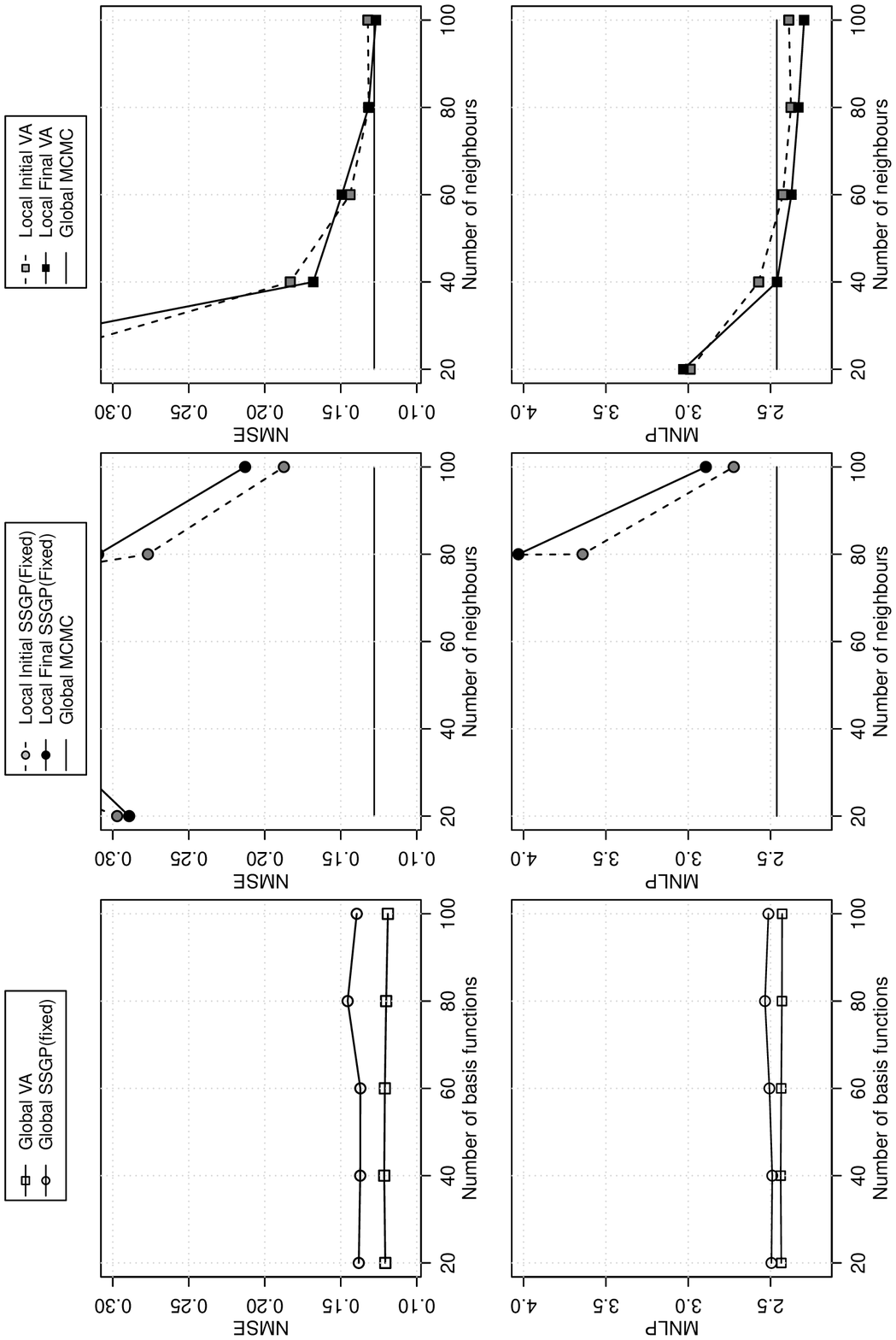} 
\caption{Auto-MPG data with 10 simulated covariates. NMSE and MNLP values averaged over ten repetitions plotted against the number of basis functions (first column) and against the number of neighbours (second and third columns). Number of basis functions used in the local methods was 20.}
\label{autoplus}  
\end{figure*}

\begin{figure*}[htb!]
\centering
\includegraphics[width=0.54\textwidth, height=0.9\textwidth, angle=270]{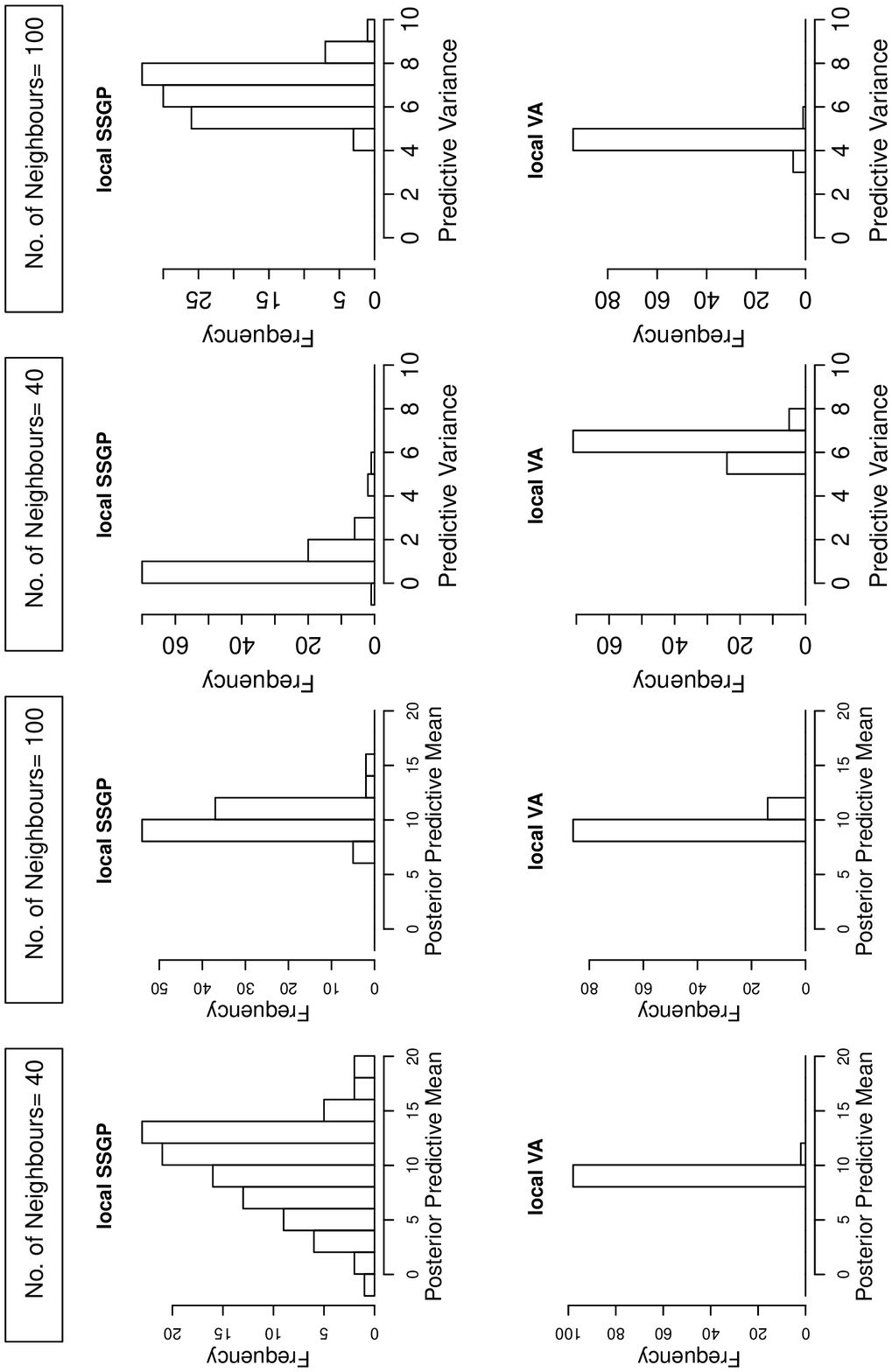} 
\caption{Auto-MPG data with 10 simulated covariates. Histogram of estimated posterior predictive means and variances over 100 repetitions. Number of basis functions used in the local methods was 20.}
\label{Posterior_mean} 
\end{figure*}

In order to explain why there is a difference in the stability of the adaptive neighbourhood approach between VA and SSGP (fixed), we examined the estimated predictive mean and variance for one test point from the Auto-MPG test set with 10 simulated irrelevant covariates. We implement the adaptive neighbourhoods approach based on just the initial fitting, which uses the shortest euclidean distance. Figure \ref{Posterior_mean} shows 100 posterior predictive means and variances from SSGP (fixed) and VA with the adaptive neighbourhood approach.  In the 100 replications, only the spectral points change. Since VA accounts for hyperparameter uncertainty, it is more robust towards the choice of spectral points. We observe that the posterior predictive means and variances are concentrated around a smaller range of values even when the size of neighbourhood is small. On the other hand, for local SSGP (fixed), the posterior predictive means vary more for different choices of the spectral points with the values ranging from 0 to 20 and with many of the posterior predictive variances small when the size of the neighbourhood is small. 

\section{Conclusion}\label{conclusion}
In this paper, we have presented a nonconjugate variational message passing algorithm for fitting sparse spectrum GP regression models where closed form updates are possible for all variational parameters, except for the evaluation of $\mathcal{H}(p,q,r)$. We note that $\mathcal{H}(p,q,r)$ can be evaluated very efficiently using quadrature and there is almost no computational overhead when compared to updates based on conditionally conjugate Inverse-Gamma priors for the variance parameters. However, half-Cauchy priors lead to much better predictive inference especially in the adaptive neighbourhood approach where the amount of training data is small. A Bayesian approach has been adopted for parameter estimation which allows covariance function hyperparameter uncertainty to be treated and empirical results suggest that this improves prediction (especially in the MNLP values) and prevents overfitting. We also propose a novel adaptive neighbourhood technique for obtaining predictive inference which is adept at handling data with nonstationarities and this approach can be extended to large data sets as well. The simulated data sets showed that weighting the dimensions according to the lengthscales estimated from an initial fit is very effective at downweighting variables of little relevance, leading to automatic variable selection and improved prediction. In addition, we introduce a technique for accelerating convergence in nonconjugate variational message passing by taking step sizes larger than one in the direction of the natural gradient of the lower bound. We do not attempt to search for the optimal step size but adopt an adaptive strategy that can be easily implemented, and empirical results indicate significant speed-ups. Algorithm \ref{Alg2} is thus an attractive alternative for fitting sparse spectrum GP regression models, which is stable, robust to overfitting for small data sets and capable of dealing with highly nonstationary data as well when used in combination with the adaptive neighbourhood approach.

\begin{acknowledgements}
We thank Lucy Marshall for supplying the rainfall-runoff data set. Linda Tan was partially supported as part of the Singapore Delft Water Alliance's tropical reservoir research programme. David Nott, Ajay Jasra and Victor Ong's research was supported by a Singapore Ministry of Education Academic Research Fund Tier 2 grant (R-155-000-143-112). We also thank the referees and associate editor for their comments which have helped improved the manuscript.
\end{acknowledgements}

\begin{appendices}
\renewcommand\thesection{Appendix \Alph{section}:}

\section{Derivation of $E_q(Z)$ and $E_q(Z^TZ)$}

\begin{lemma}
Suppose $\lambda \sim N(\mu,\Sigma)$ and $t_1$, $t_2$ are fixed vectors the same length as $\lambda$. Let $t_{12}^-=t_1-t_2$ and $t_{12}^+=t_1+t_2$, then
\begin{multline*}
E\{\cos(t_1^T\lambda)\cos(t_2^T\lambda)\} =\tfrac{1}{2}\left[\exp(-\tfrac{1}{2}{t_{12}^-}^T\Sigma t_{12}^-) \right.\\
 \left. \cdot \cos({t_{12}^-}^T\mu)  +\exp(-\tfrac{1}{2}{t_{12}^+}^T\Sigma t_{12}^+)\cos({t_{12}^+}^T\mu)\right]
 \end{multline*}
 \begin{multline*}
 E\{\sin(t_1^T\lambda)\sin(t_2^T \lambda)\} =\tfrac{1}{2}\left[\exp(-\tfrac{1}{2}{t_{12}^-}^T\Sigma t_{12}^-)\right. \\
  \left. \cdot \cos({t_{12}^-}^T\mu) -\exp(-\tfrac{1}{2}{t_{12}^+}^T\Sigma t_{12}^+)\cos({t_{12}^+}^T\mu)\right] 
 \end{multline*}
\begin{multline*}
E\{\sin(t_1^T\lambda)\cos(t_2^T \lambda)\} =\tfrac{1}{2}\left[\exp(-\tfrac{1}{2}{t_{12}^-}^T\Sigma t_{12}^-)  \right. \\
 \left. \cdot \sin({t_{12}^-}^T\mu)+\exp(-\tfrac{1}{2}{t_{12}^+}^T\Sigma t_{12}^+)\sin({t_{12}^+}^T\mu)\right]
\end{multline*}
By setting $t_2=0$ in the first and third expressions, we get 
\begin{equation*}
\begin{aligned}
E\{\cos(t_1^T\lambda)\} &=\exp(-\tfrac{1}{2}t_1^T\Sigma t_1)\cos(t_1^T\mu)\;\; \text{and} \\
E\{\sin(t_1^T\lambda)\} &=\exp(-\tfrac{1}{2}t_1^T\Sigma t_1)\sin(t_1^T\mu).
\end{aligned}
\end{equation*}
\end{lemma}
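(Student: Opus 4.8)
The plan is to reduce everything to the characteristic function of a multivariate Gaussian. Recall that if $\lambda \sim N(\mu,\Sigma)$ and $t$ is a fixed vector of the same length, then $E\{\exp(i t^T\lambda)\} = \exp(i t^T\mu - \tfrac12 t^T\Sigma t)$. Taking real and imaginary parts immediately yields the two boxed identities at the end of the lemma,
\begin{equation*}
E\{\cos(t^T\lambda)\} = \exp(-\tfrac12 t^T\Sigma t)\cos(t^T\mu), \qquad
E\{\sin(t^T\lambda)\} = \exp(-\tfrac12 t^T\Sigma t)\sin(t^T\mu),
\end{equation*}
so these are really corollaries of the main three identities rather than separate statements (and they also follow by setting $t_2 = 0$, as noted).

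For the products, the idea is to linearize using the product-to-sum trigonometric identities before taking expectations. Writing $A = t_1^T\lambda$ and $B = t_2^T\lambda$, we have $A - B = (t_{12}^-)^T\lambda$ and $A + B = (t_{12}^+)^T\lambda$, and
\begin{equation*}
\cos A\cos B = \tfrac12\{\cos(A-B) + \cos(A+B)\}, \quad
\sin A\sin B = \tfrac12\{\cos(A-B) - \cos(A+B)\}, \quad
\sin A\cos B = \tfrac12\{\sin(A-B) + \sin(A+B)\}.
\end{equation*}
Taking expectations of each side and applying the single-argument formulas above to the vectors $t_{12}^-$ and $t_{12}^+$ produces exactly the three claimed expressions: the $\cos(A-B)$ and $\cos(A+B)$ terms contribute $\exp(-\tfrac12 (t_{12}^\mp)^T\Sigma t_{12}^\mp)\cos((t_{12}^\mp)^T\mu)$, and likewise with sine in the third identity.

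There is no real obstacle here; the only care needed is bookkeeping with the $\pm$ signs — in particular that the $\sin\sin$ identity carries a minus sign on the $t_{12}^+$ term while $\cos\cos$ carries a plus, and that $\sin\cos$ is genuinely asymmetric in $t_1,t_2$ (so one should note $E\{\cos(t_1^T\lambda)\sin(t_2^T\lambda)\}$ would flip the sign of the $t_{12}^-$ contribution, consistent with $\sin$ being odd). One may either invoke the Gaussian characteristic function as a known fact or, if a self-contained argument is wanted, complete the square in the Gaussian integral $\int \exp(i t^T\lambda)\,(2\pi)^{-d/2}|\Sigma|^{-1/2}\exp(-\tfrac12(\lambda-\mu)^T\Sigma^{-1}(\lambda-\mu))\,d\lambda$, shifting the contour; this is the only step with any analytic content and it is entirely standard.
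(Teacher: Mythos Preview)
Your proposal is correct and follows essentially the same approach as the paper: both reduce the problem to the Gaussian characteristic function applied to the vectors $t_{12}^-$ and $t_{12}^+$, combined with the product-to-sum (equivalently, angle addition) trigonometric identities. The only cosmetic difference is ordering: the paper applies the characteristic function at $t_1\pm t_2$ first, expands the real and imaginary parts via angle-addition to obtain four linear equations in the product expectations, and then adds/subtracts to solve; you instead establish the single-argument formulas first and then apply product-to-sum before taking expectations, which is slightly more streamlined but mathematically identical.
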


\begin{proof}
$E[\exp\{i\lambda^T(t_1-t_2)\}]=\exp\{i \mu^T (t_1-t_2)-
\tfrac{1}{2}(t_1-t_2)^T \Sigma (t_1-t_2)\}$
implies
\begin{equation}\label{e1} 
\begin{aligned}
E[\cos\{\lambda^T(t_1-t_2)\}] & = E\{\cos(t_1^T\lambda)\cos(t_2^T\lambda) \\
& \quad +\sin(t_1^T\lambda)\sin(t_2^T\lambda)\}  \\
& = \exp\{-\tfrac{1}{2}(t_1-t_2)^T \Sigma (t_1-t_2)\}  \\
& \quad \cdot \cos\{\mu^T (t_1-t_2)\}
\end{aligned}
\end{equation}
and
\begin{equation}\label{e2}
\begin{aligned}
E[\sin\{\lambda^T(t_1-t_2)\}] & = E\{\sin(t_1^T\lambda)\cos(t_2^T\lambda) \\
& \quad -\cos(t_1^T\lambda)\sin(t_2^T\lambda)\}  \\
& = \exp\{-\tfrac{1}{2}(t_1-t_2)^T \Sigma (t_1-t_2)\}  \\
& \quad \cdot \sin\{\mu^T (t_1-t_2)\}.
\end{aligned}
\end{equation}
Replacing $t_2$ by $-t_2$, we get
\begin{equation} \label{e3} 
\begin{aligned}
E[\cos\{\lambda^T(t_1+t_2)\}]& = E\{\cos(t_1^T\lambda)\cos(t_2^T\lambda) \\
& \quad -\sin(t_1^T\lambda)\sin(t_2^T\lambda)\}  \\
& = \exp\{-\tfrac{1}{2}(t_1+t_2)^T \Sigma (t_1+t_2)\} \\
& \quad  \cdot \cos\{\mu^T (t_1+t_2)\}
\end{aligned}
\end{equation}
and
\begin{equation}\label{e4}
\begin{aligned}
E[\sin\{\lambda^T(t_1+t_2)\}]& = E\{\sin(t_1^T\lambda)\cos(t_2^T\lambda) \\
& \quad +\cos(t_1^T\lambda)\sin(t_2^T\lambda)\} \\
& = \exp\{-\tfrac{1}{2}(t_1+t_2)^T \Sigma (t_1+t_2)\}  \\
& \quad \cdot  \sin\{\mu^T (t_1+t_2)\}.
\end{aligned}
\end{equation}
(\ref{e1})+(\ref{e3}) gives the first equation of the lemma, (\ref{e1})-(\ref{e3}) gives the second and (\ref{e2})+(\ref{e4}) gives the third. \QEDB
\end{proof}

Using Lemma 1, we have 
\begin{equation*}
E_q(Z)=[E_q(Z_1), \dots, E_q(Z_n)]^T,
\end{equation*}
where
\begin{equation*}
\begin{aligned}
E_q(Z_i^T) & = \big[ \exp(-\tfrac{1}{2}t_{i1}^T\Sigma_\lambda^q t_{i1})\cos(t_{i1}^T\mu_\lambda^q) ,\dots, \\
& \qquad \exp(-\tfrac{1}{2}t_{im}^T\Sigma_\lambda^q t_{im})\cos(t_{im}^T\mu_\lambda^q), \\
& \qquad \exp(-\tfrac{1}{2}t_{i1}^T\Sigma_\lambda^q t_{i1})\sin(t_{i1}^T\mu_\lambda^q),\dots, \\
& \qquad \exp(-\tfrac{1}{2}t_{im}^T\Sigma_\lambda^q t_{im})\sin(t_{im}^T\mu_\lambda^q) \big]
\end{aligned}
\end{equation*}
and $t_{ir}=s_r \odot x_i$ for $i=1,\dots,n$, $r=1,\dots,m$. We also have $E_q(Z^TZ)=\sum_{i=1}^n E_q(Z_iZ_i^T)$ where $E_q(Z_iZ_i^T)=\left[\begin{smallmatrix}
P_i & Q_i^T \\  Q_i & R_i  \end{smallmatrix}\right]$, where $P_i$, $Q_i$, $R_i$ are all $m\times m$ matrices and 
\begin{equation*}
\begin{aligned}
{P_i}_{rl} &=\tfrac{1}{2}\big\{\exp(-\tfrac{1}{2}{t_{irl}^-}^T\Sigma_\lambda^qt_{irl}^-) \cos({t_{irl}^-}^T) \mu_\lambda^q \\
& \qquad\qquad + \exp(-\tfrac{1}{2}{t_{irl}^+}^T\Sigma_\lambda^qt_{irl}^+) \cos({t_{irl}^+}^T) \mu_\lambda^q\big\},\\
{Q_i}_{rl} &=\tfrac{1}{2}\big\{-\exp(-\tfrac{1}{2}{t_{irl}^-}^T\Sigma_\lambda^qt_{irl}^-) \sin({t_{irl}^-}^T) \mu_\lambda^q \\
& \qquad\qquad + \exp(-\tfrac{1}{2}{t_{irl}^+}^T\Sigma_\lambda^qt_{irl}^+) \sin({t_{irl}^+}^T) \mu_\lambda^q\big\},\\
{R_i}_{rl} &=\tfrac{1}{2}\big\{\exp(-\tfrac{1}{2}{t_{irl}^-}^T\Sigma_\lambda^qt_{irl}^-) \cos({t_{irl}^-}^T) \mu_\lambda^q \\
&\qquad \qquad - \exp(-\tfrac{1}{2}{t_{irl}^+}^T\Sigma_\lambda^qt_{irl}^+) \cos({t_{irl}^+}^T) \mu_\lambda^q\big\},
\end{aligned}
\end{equation*}
$t_{irl}^-=t_{ir}-t_{il}$, $t_{irl}^+=t_{ir}+t_{il}$ for $r=1,\dots,m$, $l=1,\dots,m$.

\section{Derivation of lower bound}
From (\ref{LB}), the lower bound is given by 
\begin{equation*}
\mathcal{L}=E_q\{\log p(y,\theta)\}-E_q\{\log q(\theta)\}
\end{equation*}
where
\begin{align*}
E_q\{\log p(y,\theta)\}&= E_q\{\log p(y|\alpha,\lambda,\gamma)\} + E_q\{\log p(\alpha|\sigma)\} \\
&\quad +E_q\{\log p(\lambda)\}+E_q\{\log p(\sigma)\} \\
& \quad +E_q\{\log p(\gamma)\}, \\
E_q\{\log q(\theta)\} &= E_q\{\log q(\alpha)\} +E_q\{\log q(\lambda)\} \\
& \quad +E_q\{\log q(\sigma)\}+E_q\{\log q(\gamma)\}.
\end{align*}
The terms in the lower bound can be evaluated as follows:
\begin{multline*}
E_q\{\log p(y|\alpha,\beta,\lambda,\gamma)\}=-\tfrac{n}{2}\log (2\pi)-\tfrac{n}{2}E_q(\log\gamma^2) \\
-\tfrac{1}{2}\big[y^Ty-2y^TE_q(Z)\mu_\alpha^q+\text{tr}\{(\mu_\alpha^q{\mu_\alpha^q}^T+\Sigma_\alpha^q)E_q(Z^TZ)\}\big] \\
\cdot {\mathcal{H}(n,C_\gamma^q,A_\gamma^2)}/{\mathcal{H}(n-2,C_\gamma^q,A_\gamma^2)}
\end{multline*}
\begin{multline*}
E_q\{\log p(\alpha|\sigma)\}=-m\log (2\pi) -mE_q\{\log\sigma^2\} \\
+m\log m -\tfrac{m}{2}\tfrac{\mathcal{H}(2m,C_\sigma^q,A_\sigma^2)} {\mathcal{H}(2m-2,C_\sigma^q,A_\sigma^2)}\{{\mu_\alpha^q}^T\mu_\alpha^q+\text{tr}(\Sigma_\alpha^q)\}
\end{multline*}
\begin{multline*}
E_q\{\log p(\lambda)\}=-\tfrac{d}{2}\log (2\pi) -\tfrac{1}{2}\log|\Sigma_\lambda^0| \\ -\tfrac{1}{2}(\mu_\lambda^q-\mu_\lambda^0)^T{\Sigma_\lambda^0}^{-1}(\mu_\lambda^q-\mu_\lambda^0)-\tfrac{1}{2}\text{tr}({\Sigma_\lambda^0}^{-1}\Sigma_\lambda^q)
\end{multline*}
\begin{equation*}
E_q\{\log p(\sigma)\}=\log(2A_\sigma)-\log\pi-E_q\{\log(A_\sigma^2+\sigma^2)\}
\end{equation*}
\begin{equation*}
E_q\{\log p(\gamma)\}=\log(2A_\gamma)-\log\pi-E_q\{\log(A_\gamma^2+\gamma^2)\}
\end{equation*}
\begin{equation*}
E_q\{\log q(\alpha)\}=-m\log(2\pi)-\frac{1}{2}\log|\Sigma_\alpha^q|-m
\end{equation*}
\begin{equation*}
E_q\{\log q(\lambda)\}=-\tfrac{d}{2}\log (2\pi) -\tfrac{1}{2}\log|\Sigma_\lambda^q|-\tfrac{d}{2}
\end{equation*}
\begin{multline*}
E_q\{\log p(\sigma)\}=-C_\sigma^q \tfrac{\mathcal{H}(2m,C_\sigma^q,A_\sigma^2)}{\mathcal{H}(2m-2,C_\sigma^q,A_\sigma^2)} - 2mE_q\{\log\sigma\}\\
-\log \mathcal{H}(2m-2,C_\sigma^q,A_\sigma^2) -E_q\{\log(A_\sigma^2+\sigma^2)\}
\end{multline*}
\begin{multline*}
E_q\{\log p(\gamma)\} =-C_\gamma^q  {\mathcal{H}(n,C_\gamma^q,A_\gamma^2)} /{\mathcal{H}(n-2,C_\gamma^q,A_\gamma^2)}  \\
-\log \mathcal{H}(n-2,C_\gamma^q,A_\gamma^2)-nE_q\{\log\gamma\}-E_q\{\log(A_\gamma^2+\gamma^2)\}
\end{multline*}
Putting these terms together and making use of the updates in steps 5 and 6 of Algorithm \ref{Alg1} gives the lower bound in \eqref{LBexp}.

\section{Derivation of simplified updates in Algorithm 2}

It can be shown \citep[see][]{Wand2013,Tan2013} that the natural parameter of $q(\lambda)=N(\mu_\lambda^q,\Sigma_\lambda^q)$ is 
\begin{equation*}
\eta_\lambda=\begin{bmatrix}  -\frac{1}{2}D_d^T \text{vec}({\Sigma_\lambda^q}^{-1}) \\ {\Sigma_\lambda^q}^{-1}\mu_\lambda^q \end{bmatrix},
\end{equation*}
where $D_d$ is a unique $d^2 \times \tfrac{d}{2}(d+1)$ matrix that transforms $\text{vech}(A)$ into $\text{vec}(A)$ for any $d \times d$ symmetric square matrix $A$, that is, $D_d\text{vech}(A)=\text{vec}(A)$. We use $\text{vech}(A)$ to denote the $\tfrac{1}{2}d(d+1) \times 1$ vector obtained from $\text{vec}(A)$ by eliminating all supradiagonal elements of $A$. \cite{Magnus1988} is a good reference for the matrix differential calculus involved in the derivation below. From \eqref{step size update} and \citep[][pg. 7]{Tan2013}, we have 
\begin{multline} \label{f1}
\begin{bmatrix}  -\frac{1}{2}D_d^T \text{vec} \Big({{\Sigma_\lambda^q}^{(t)}}^{-1}\Big) \\ {{\Sigma_\lambda^q}^{(t)}}^{-1} {\mu_\lambda^q}^{(t)} \end{bmatrix}
 =(1-a_t) \\
 \cdot \begin{bmatrix}  -\frac{1}{2}D_d^T \text{vec}\Big({{\Sigma_\lambda^q}^{(t-1)}}^{-1}\Big) \\ {{\Sigma_\lambda^q}^{(t-1)}}^{-1} {\mu_\lambda^q}^{(t-1)} \end{bmatrix} \\
+ a_t \begin{bmatrix} D_d^T  & 0 \\
-2({{\mu_\lambda^q}^{(t-1)}}^T \otimes I){D_d^+}^T \negthinspace D_d^T & I 
\end{bmatrix}   \sum_{a \in N(\lambda)}
\begin{bmatrix}
\frac{\partial S_a}{\partial  \text{vec}(\Sigma_\lambda^q)} \\ 
\frac{\partial S_a}{\partial  \mu_\lambda^q} 
\end{bmatrix},
\end{multline}
where $\dfrac{\partial S_a}{\partial  \text{vec}(\Sigma_\lambda^q)}$ and $\dfrac{\partial S_a}{\partial  \mu_\lambda^q}$ are evaluated at
\begin{equation*}
\Sigma_\lambda^q={{\Sigma_\lambda^q}^{(t)}}^{-1} \;\; \text{and} \;\;  \mu_\lambda^q={\mu_\lambda^q}^{(t-1)}.
\end{equation*}  
Let 
\begin{equation*}
\sum_{a \in N(\lambda)} \frac{\partial S_a}{\partial  \text{vec}(\Sigma_\lambda^q)} = -\frac{1}{2}\text{vec}(G).
\end{equation*}
The first line of \eqref{f1}  simplifies to 
\begin{equation*}
\begin{aligned}
{{\Sigma_\lambda^q}^{(t)}}^{-1} & = (1-a_t) {{\Sigma_\lambda^q}^{(t)}}^{-1} + a_t G  \\
\Rightarrow {\Sigma_\lambda^q}^{(t)} & =\{(1-a_t) {{\Sigma_\lambda^q}^{(t)}}^{-1} + a_t G\}^{-1}.
\end{aligned}
\end{equation*}
The second line of \eqref{f1} gives 
\begin{equation*}
\begin{aligned}
{{\Sigma_\lambda^q}^{(t)}}^{-1} {\mu_\lambda^q}^{(t)} &=(1-a_t) {{\Sigma_\lambda^q}^{(t-1)}}^{-1} {\mu_\lambda^q}^{(t-1)} \\
& \quad + a_t G {\mu_\lambda^q}^{(t-1)} + a_t \sum_{a \in N(\lambda)}\frac{\partial S_a}{\partial  \mu_\lambda^q} \\
& =  {{\Sigma_\lambda^q}^{(t)}}^{-1}   {\mu_\lambda^q}^{(t-1)} + a_t \sum_{a \in N(\lambda)}\frac{\partial S_a}{\partial  \mu_\lambda^q}  \\
\Rightarrow   {\mu_\lambda^q}^{(t)} &=  {\mu_\lambda^q}^{(t-1)}   + a_t  {\Sigma_\lambda^q}^{(t)}  \sum_{a \in N(\lambda)}\frac{\partial S_a}{\partial  \mu_\lambda^q}.
\end{aligned}
\end{equation*}

\end{appendices}


\begin{thebibliography}{References}

\bibitem[Amari(1998)]{Amari1998}
Amari, S.:
Natural gradient works efficiently in learning.
Neural Computation 10, 251--276 (1998)

\bibitem[Attias(1999)]{Attias1999}
Attias, H.:
Inferring parameters and structure of latent variable models by variational Bayes.
In Laskey, K. and Prade, H. (eds.) Proceedings of the 15th Conference on Uncertainty in Artificial Intelligence, pp. 21--30.
Morgan Kaufmann, San Francisco, CA (1999)

\bibitem[Attias(2000)]{Attias2000}
Attias, H.:
A variational Bayesian framework for graphical models.
In Solla, S. A., Leen, T. K. and M\"{u}ller, K.-R. (eds.) Advances in Neural Information Processing Systems 12, pp. 209--215.
MIT Press, Cambridge, MA (2000)

\bibitem[Blei and Jordan(2006)]{Blei2006}
Blei, D. M. and Jordan, M. I.:
Variational inference for Dirichlet process mixtures. 
Bayesian Analysis 1, 121--144 (2006)

\bibitem[Boughton(2004)]{Boughton2004}
Boughton, W.:
The Australian water balance model.
Environmental Modelling and Software 19, 943--956 (2004)

\bibitem[Braun and McAuliffe(2010)]{Braun2010}
Braun, M. and McAuliffe, J.:
Variational inference for large-scale models of discrete choice.
Journal of the American Statistical Association 105, 324--335 (2010)

\bibitem[Gelman(2006)]{Gelman2006}
Gelman, A.:
Prior distributions for variance parameters in hierarchical models.
Bayesian Analysis 1, 515--533 (2006)

\bibitem[Gramacy and Apley(2014)]{Gramacy2014}
Gramacy, R. B. and Apley, D. W. (2014).
Local Gaussian process approximation for large computer experiments.
Journal of Computational and Graphical Statistics. To appear.

\bibitem[Haas(1995)]{Haas1995}
Haas, T. C.:
Local prediction of a spatio-temporal process with an application to wet sulfate deposition.
Journal of the American Statistical Association 90, 1189--1199 (1995)

\bibitem[Hastie and Tibshirani(1996)]{Hastie1996}
Hastie, T. and Tibshirani, R.:
Discriminant adaptive nearest neighbor classification. 
IEEE Transactions on Pattern Analysis and Machine Intelligence 18, 607--616 (1996)

\bibitem[Hoffman {\it et al.}(2013)]{Hoffman2013}
Hoffman, M. D., Blei, D. M., Wang, C. and Paisley, J.:
Stochastic variational inference.
Journal of Machine Learning Research 14, 1303--1347 (2013)

\bibitem[Honkela {\it et al.}(2003)]{Honkela2003}
Honkela, A., Valpola, H. and Karhunen, J.:
Accelerating cyclic update algorithms for parameter estimation by pattern searches.
Neural Processing Letters 17, 191--203 (2003)

\bibitem[Huang {\it et al.}(2005)]{Huang2005}
Huang, H., Yang, B. and Hsu, C. (2005).
Triple jump acceleration for the EM algorithm.
In Han, J., Wah, B. W., Raghavan, V., Wu, X. and rastogi, R. (eds.) Proceedings of the 5th IEEE International Conference on Data Mining, pp. 649--652.
IEEE Computer Society, Washington, DC, USA.

\bibitem[Kim {\it et al.}(2005)]{Kim2005}
Kim, H.-M., Mallicka, B. K. and Holmesa, C. C.:
Analyzing nonstationary spatial data using piecewise Gaussian processes.
Journal of the American Statistical Association 100, 653--668 (2005)

\bibitem[Knowles and Minka(2011)]{Knowles2011}
Knowles, D. A., Minka, T. P.:
Non-conjugate variational message passing for multinomial and binary regression. 
In Shawe-Taylor, J., Zemel, R. S., Bartlett, P., Pereira, F. and Weinberger, K. Q. (eds.) Advances in Neural Information Processing Systems 24, pp. 1701--1709.
Red Hook, NY: Curran Associates, Inc. (2011)

\bibitem[L\'{a}zaro-Gredilla {\it et al.}(2010)]{Lazaro2010}
L\'{a}zaro-Gredilla, M., Qui\~{n}onero-Candela, J., Rasmussen, C. E. and Figueiras-Vidal, A. R.:
Sparse spectrum Gaussian process regression.
Journal of Machine Learning Research 11, 1865--1881 (2010)

\bibitem[L\'{a}zaro-Gredilla and Titsias(2011)]{Lazaro2011}
L\'{a}zaro-Gredilla, M. and Titsias, M. K.:
Variational heteroscedastic Gaussian process regression.
In Getoor, L. and Scheffer, T. (eds.) Proceedings of the 28th International Conference on Machine Learning, pp. 841--848.
Omnipress, Madison, MI, USA (2011)

\bibitem[Lindgren {\it et al.}(2011)]{Lindgren2011}
Lindgren, F., Rue, H. and Lindstr\"{o}m, J.:
An explicit link between Gaussian fields and Gaussian Markov random fields: the stochastic partial differential equation approach.
Journal of the Royal Statistical Society: Series B 73, 423--498 (2011)

\bibitem[Magnus and Neudecker(1988)]{Magnus1988}
Magnus, J. R. and Neudecker, H.:
Matrix differential calculus with applications in statistics and econometrics.
Wiley, Chichester, UK (1988)

\bibitem[Nguyen-Tuong {\it et al.}(2008)]{Nguyen2009}
Nguyen-Tuong, D., Seeger, M. and Peters, J.: 
Model learning with local Gaussian process regression.
Advanced Robotics 23, 2015--2034 (2009)

\bibitem[Nott {\it et al.}(2012)]{Nott2012}
Nott, D. J., Tan, S. L., Villani, M. and Kohn, R.:
Regression density estimation with variational methods and stochastic approximation.
Journal of Computational and Graphical Statistics 21, 797--820 (2012)

\bibitem[Ormerod and Wand(2010)]{Ormerod2010}
Ormerod, J. T. and Wand, M. P.:
Explaining variational approximations.
The American Statistician 64, 140--153 (2010)

\bibitem[Park and Choi(2010)]{Park2010}
Park, S. and Choi, S.:
Hierarchical Gaussian process regression.
In Sugiyama, M. and Yang, Q. (eds.) Proceedings of 2nd Asian Conference on Machine Learning, pp. 95--110 (2010)

\bibitem[Qi and Jaakkola(2006)]{Qi2006}
Qi, Y. and Jaakkola, T. S.:
Parameter expanded variational Bayesian methods.
In Sch\"{o}lkopf, B., Platt, J. and Hofmann, T. (eds.) Advances in Neural Information Processing Systems 19, pp. 1097--1104.
MIT Press, Cambridge (2006)

\bibitem[Quinlan(1993)]{Quinlan1993}
Quinlan, R.:
Combining Instance-Based and Model-Based Learning.
In Proceedings on the Tenth International Conference of Machine Learning, 236-243. 
University of Massachusetts, Amherst. Morgan Kaufmann (1993)

\bibitem[Qui\~{n}onero-Candela and Rasmussen(2005)]{Quinonero2005}
Qui\~{n}onero-Candela, T. and Rasmussen, C. E.:
A unifying view of sparse approximate Gaussian process regression.
Journal of Machine Learning Research 6, 1939--1959 (2005)

\bibitem[Rasmussen and Williams(2006)]{Rasmussen2006}
Rasmussen, C. E. and Williams, C. K. I.:
Gaussian Processes for Machine Learning.
MIT Press, Cambridge, MA (2006)

\bibitem[Ren {\it et al.}(2011)]{Ren2011}
Ren, Q., Banerjee, S., Finley, A. O. and Hodges, J. S.: 
Variational Bayesian methods for spatial data analysis.
Computational Statistics and Data Analysis 55, 3197--3217 (2011)

\bibitem[Salakhutdinov and Roweis(2003)]{Salak2003}
Salakhutdinov, R. and Roweis, S.:
Adaptive overrelaxed bound optimization methods.
In Fawcett, T. and Mishra, N. (eds.) Proceedings of the 20th International Conference on Machine Learning, pp. 664--671.
AAAI Press, Menlo Park, California (2003)

\bibitem[Snelson and Ghahramani(2006)]{Snelson2006}
Snelson, E. and Ghahramani, Z.:
Sparse Gaussian processes using pseudo-inputs. 
In Weiss, Y., Sch\"{o}lkopf, B. and Platt, J. (eds.) Advances in Neural Information Processing Systems 18, pp. 1257--1264.
MIT Press, Cambridge, MA. (2006)

\bibitem[Snelson and Ghahramani(2007)]{Snelson2007}
Snelson, E. and Ghahramani, Z.:  
Local and global sparse Gaussian process approximations. 
In Meila, M. and Shen, X. (eds) JMLR Workshop and Conference Proceedings Volume 2: AISTATS 2007, pp. 524--531 (2007)

\bibitem[Stan Development Team(2014)]{rstan-software:2014}
Stan Development Team:
RStan: the R interface to Stan, Version 2.5.0.
http://mc-stan.org/rstan.html (2014)

\bibitem[Stein {\it et al.}(2004)]{Stein2004}
Stein, M. L., Chi, Z. and Welty, L. J.:
Approximating likelihoods for large spatial data sets.
Journal of the Royal Statistical Society: Series B 66, 275--296 (2004)

\bibitem[Tan and Nott(2013)]{Tan2013}
Tan, L. S. L. and Nott, D. J.: 
Variational inference for generalized linear mixed models using partially non-centered parametrizations.
Statistical Science 28, 168--188 (2013

\bibitem[Tan and Nott(2014)]{Tan2014}
Tan, L. S. L. and Nott, D. J.:
A stochastic variational framework for fitting and diagnosing generalized linear mixed models.
Bayesian Analysis (2014).  
doi: 10.1214/14-BA885 

\bibitem[Titsias(2009)]{Titsias2009}
Titsias, M. K.: .
Variational learning of inducing variables in sparse Gaussian processes.
In van Dyk, D. and Welling, M. (eds.) Proceedings of the 12th International Conference on Artificial Intelligence and Statistics, pp. 567--574 (2009)

\bibitem[Urtasun and Darrell(2008)]{Urtasun2008}
Urtasun, R.  and Darrell, T.: 
Sparse probabilistic regression for activity-independent human pose inference.
In IEEE Conference on Computer Vision and Pattern Recognition 2008, pp. 1--8 (2008)

\bibitem[Vecchia(1988)]{Vecchia1988}
Vecchia, A. V.:  
Estimation and model identiﬁcation for continuous spatial processes.
Journal of the Royal Statistical Society: Series B 50, 297--312 (1988)

\bibitem[Walder {\it et al.}(2008)]{Walder2008}
Walder, C., Kim, K. I. and Sch\"{o}lkopf, B.:
Sparse multiscale Gaussian process regression.
In McCallum, A. and Roweis, S. (eds.) Proceedings of the 25th International Conference on Machine Learning pp. 1112--1119. 
ACM Press, New York (2008)

\bibitem[Wand {\it et al.}(2011)]{Wand2011}
Wand, M. P., Ormerod, J. T., Padoan, S. A. and Fr\"{u}hrwirth, R.:
Mean field variational Bayes for elaborate distributions.
Bayesian Analysis 6, 847--900 (2011)

\bibitem[Wand(2013)]{Wand2013}
Wand, M. P.:
Fully simplified multivariate normal updates in non-conjugate variational message passing.
Journal of Machine Learning Research 15, 1351--1369 (2014)

\bibitem[Wang and Titterington(2005)]{Wang2005}
Wang, B. and Titterington, D. M.:
Inadequacy of interval estimates corresponding to variational Bayesian approximations.
In Cowell, R. G. and Ghahramani, Z. (eds.), {\em Proceedings of the 10th International Workshop on Artificial Intelligence and Statistics}, pp. 373--380. 
Society for Artificial Intelligence and Statistics (2005) 

\bibitem[Wang and Titterington(2006)]{Wang2006}
Wang, B. and Titterington, D. M.:
Convergence properties of a general algorithm for calculating variational Bayesian estimates for a normal mixture model.
Bayesian Analysis 3, 625--650 (2006)

\bibitem[Winn and Bishop(2005)]{Winn2005}
Winn, J. and Bishop, C.M.:
Variational message passing.
Journal of Machine Learning Research 6, 661--694 (2005)


\end{thebibliography}
\end{document}